\newtheorem{theorem}{Theorem}
\newtheorem{assumption}{Assumption}
\newtheorem{definition}{Definition}
\newtheorem{lemma}{Lemma}
\newenvironment{proof}[1][Proof]{\noindent \textbf{#1.} }{\ \hfill $\square$ \bigskip}
\definecolor{darkgreen}{rgb}{0.0, 0.2, 0.13}
\definecolor{webgreen}{RGB}{19,138,40}
\definecolor{webbrown}{RGB}{30,30,30}
\begin{document}

\title{A Nearly Similar Powerful Test for Mediation\newline
}
\author{Kees Jan van Garderen \\
Amsterdam School of Economics\\
University of Amsterdam\\
K.J.vanGarderen@uva.nl \and Noud van Giersbergen \\
Amsterdam School of Economics\\
University of Amsterdam\\
N.P.A.vanGiersbergen@uva.nl}
\date{}
\maketitle

\begin{abstract}
This paper derives a new powerful test for mediation that is easy to use.
Testing for mediation is empirically very important in psychology,
sociology, medicine, economics and business, generating over 100,000
citations to a single key paper. The no-mediation hypothesis $H_{0}:\theta
_{1}\theta _{2}=0$ also poses a theoretically interesting statistical
problem since it defines a manifold that is non-regular in the origin where
rejection probabilities of standard tests are extremely low. We prove that a
similar test for mediation only exists if the size is the reciprocal of an
integer. It is unique, but has objectionable properties. We propose a new
test that is nearly similar with power close to the envelope without these
abject properties and is easy to use in practice. Construction uses the
general varying $g$-method that we propose. We illustrate the results in an
educational setting with gender role beliefs and in a trade union sentiment
application.\bigskip

Keywords: Varying $g$-method, Mediation, Indirect Effect, Power Envelope,
Similar Tests, Invariant Tests, Optimal Tests
\end{abstract}


\centerline{Version October, 2021}

\newpage

\section{Introduction}

This paper derives a new powerful test for mediation that is easy to use.
Testing for mediation effects is empirically extremely important in various
scientific disciplines. A key paper in psychology, \citet{Baron1986} has
more than 100,000 citations\footnote{%
\ Cited by 106,782 on 13 October 2021, 90,147 on 15 January 2020, and 79,205
on 22 October 2018.} and is used in many other fields. Mediation testing is
important in accounting, e.g. \citet{Coletti2005}, marketing, e.g. %
\citet{Mackenzie1986}, sociology, e.g. \citet{Alwin1975} who used the
expression indirect effect, a term commonly used in economics also.For a
recent overview of mediation in economics see \citet{huber2020mediation} and
e.g. \cite{Heckman2015a, Heckman2015b} on treatment effects and production
technology, and \citet{imbens2020potential} who extensively reviews
connections between directed acyclic graphs (DAGs), potential outcomes,
causal inference, instrumental variables, and mediation. %
\citet{frewen2013perceived} is exemplary for the increasing network
literature, including DAGs, using mediation. The minimal selection here is
hardly representative for the vast body of literature on mediation analysis.
It only illustrates the breadth of its empirical relevance. Tests for
mediation can have extremely low power, especially when the effect is small,
or estimated with large variance. The primary purpose of this paper is to
provide a new and more powerful test that is easy to use.

The aim of mediation testing is to discover if an independent variable ($X$)
causes a dependent variable ($Y$) via an intervening, or mediating variable (%
$M$). The mediating variable is exogenous in the common experimental setting
in psychology and other fields, but is also considered exogenous in other
settings where assignments are random or constitute a natural experiment.
The basic model is simply: 
\begin{align}
Y& =\tau X+\theta _{2}M+u,  \label{eq:unrstricted_Y} \\
M& =\theta _{1}X+v,  \label{eq:unrstricted_I}
\end{align}%
where all variables are taken in deviation from their means, or more
generally after partialing out other exogenous effects. The disturbances $u$
and $v$ are assumed to be independent because of an experimental set up and
more generally because no influence of $Y$ on $M$ is assumed in this type of
model. This independence is a crucial identification condition, since the
parameter $\theta _{2}$ cannot be estimated consistently if $M$ is
endogenous. We make a convenient distributional assumption: $\left(
u_{i},v_{i}\right) ^{\prime }\sim II\emph{N}\left( 0,diag\left( \sigma
_{11},\sigma _{22}\right) \right) $, $i=1,\cdots ,n,$ with $n$ the number of
observations. This facilitates a likelihood analysis, but is not necessary
for the asymptotic normality of the $t$-statistics that will be used.

\citet{Mackinnon2002} give a literature review and compare 14 different
methods for testing the effects of a mediation variable. These methods are
based on standardized measures of the product of two coefficients $\theta
_{1}\theta _{2}$ or based on the difference of two related coefficients ($%
\tau ^{\ast }-\tau $) in equations (\ref{eq:unrstricted_Y}) and (\ref%
{eq:restrictedmodel}):%
\begin{equation}
Y=\tau ^{\ast }X+u^{\ast }.  \label{eq:restrictedmodel}
\end{equation}%
If there is a mediation effect, then $X$ influences $M,$ such that $\theta
_{1}\neq 0,$ and $M$ influences $Y,$ such that $\theta _{2}\neq 0.$ If there
is no mediation by $M,$ then the effect of $X$ on $Y$ is not altered by the
inclusion of $M$ such that $\tau ^{\ast }-\tau =0.$

Model (\ref{eq:restrictedmodel}) is a restricted version of (\ref%
{eq:unrstricted_Y}) with $\theta _{2}=0,$ and it is straightforward
therefore to show that the OLS estimates for the three models satisfy $\hat{%
\tau}^{\ast }=\hat{\tau}+\hat{\theta}_{1}\hat{\theta}_{2}$ and the relation $%
\tau ^{\ast }-\tau =\theta _{1}\theta _{2}$ also holds in model
interpretation terms; see Appendix \ref{sec:AppendixTheory}.

\begin{figure}[h]
\par
\begin{center}
\includegraphics[width=0.55\textwidth]{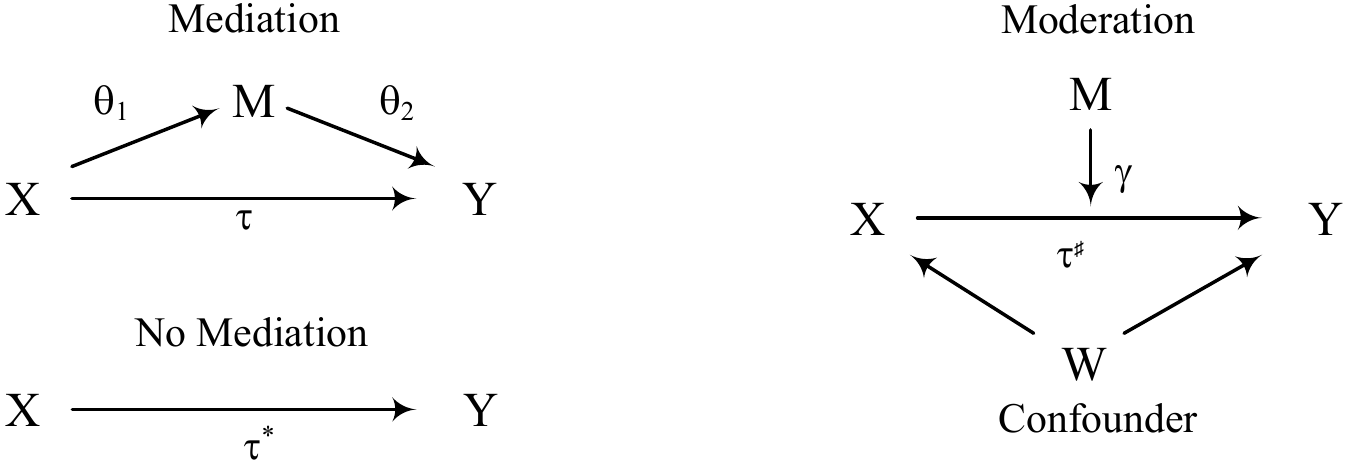}
\end{center}
\par
\vspace*{-6mm}
\caption{(No)-Mediation, Moderation, and Confoundedness}
\label{fig:MediationModerationConfoundedness}
\end{figure}

Figure \ref{fig:MediationModerationConfoundedness} illustrates mediation, no
mediation, and the related concept of moderation in terms of directed
acyclic graphs. $M$ is a moderator if it changes the relation between $X$
and $Y$. In its basic form this is modeled by adding the interaction term
between $X$ and $M$ to the model. When confounders $W$ are observed and
included we have:%
\begin{equation}
Y=\tau ^{\sharp }X+\gamma XM+\delta W+u^{\sharp }.
\end{equation}%
$W$ can also be included in (\ref{eq:unrstricted_Y}) - (\ref%
{eq:restrictedmodel}), but partialed out such that $Y$, $X,$ and $M$ are
residuals after regressions on $W.$ Moderation\ and confoundedness can be
tested by the ordinary $t$ or $F$ tests on $\gamma $ and $\delta ,$ but
testing for mediation is less straightforward.

The best well known and commonly used test for mediation by \citet{Sobel1982}
is a Wald-type test of the form $\ \hat{\theta}_{1}\hat{\theta}_{2}/SE(\hat{%
\theta}_{1}\hat{\theta}_{2})$, with $SE(\hat{\theta}_{1}\hat{\theta}_{2})$\
an estimate of the standard error of the product $\hat{\theta}_{1}\hat{\theta%
}_{2}.$ It is available in standard statistical packages such as SAS, R,
Stata, or SPSS. It has good properties when either $\theta _{1}$ or $\theta
_{2}$ is large and the standard errors of $\hat{\theta}_{1}$ and $\hat{\theta%
}_{2}$ are small, but if the two $t$-tests for testing $\theta _{1}=0$ and $%
\theta _{2}=0$ tend to be small, properties deteriorate. For parameter
values under the null, the Null Rejection Probability (NRP) can be very
close to zero and, under the alternative, power can fall far below the size
(highest NRP) of $5\%$ that we use throughout. All tests considered in %
\citet{Mackinnon2002} suffer from these problems. The distributions of the
test statistics considered in the literature depend on the value of the
parameters under the null. As a consequence none of these tests is similar,
meaning that rejection probabilities are not constant on the boundary of the
null hypothesis. In fact rejection probabilities under alternatives close to
the origin, i.e. power, can be much lower than size and these tests are
biased.

Much effort in the literature has gone into improving well-known test
statistics, such as the Wald statistic, without a satisfactory solution. The
bootstrap is invalid (see \citet{VanGarderenVanGiersbergen2021}) and cannot
salvage these statistics. The key step in our approach is to move away from
test statistics and consider the critical region in the sample space
directly and optimize the flexible boundary we define.

We make two main contributions. Our main theoretical contribution in Theorem %
\ref{Th:exactsimilartest} shows that a similar mediation test exists if and
only if the level of the test is the reciprocal of an integer, i.e. $%
1/\alpha \in 
\mathbb{N}
,$ or $\alpha =0$. Hence, for practical levels such as $1\%,$ $5\%$, or $%
10\% $ an exact similar test exists. The proof is constructive and the test
is unique within the class considered. Unfortunately, the critical region is
objectionable in that it includes an area near the origin where both $t$%
-statistics are arbitrarily close to zero. Such values do not provide
overwhelming evidence against the null and \citet{Perlman1999} coined the
term \textquotedblleft \emph{Emperor's New Tests}\textquotedblright\ for
similar tests with undesirable properties like these. Insistence on
similarity can render LR tests $\alpha $-inadmissible, cf. 
\citet[Section
6.7]{Lehmann2005}, but \citet{Perlman1999} give examples where similar tests
have extremely undesirable properties, yet inadmissible LR tests still
provide reasonable answers. In the mediation setting the LR test is also
inadmissible, but does not provide a satisfactory answer in this case. It is
much better than the Wald test as we will see, but nevertheless suffers from
extremely poor power properties for parameter values close to the origin.

So a better test is called for and our second, more important contribution
therefore is practical. We construct a new simple test for mediation that is
uniformly more powerful than the LR test without the undesirable properties
sketched by \citet{Perlman1999}, and that is nearly similar and practically
unbiased.

It is extremely simple to use:

\begin{enumerate}
\item Order the absolute values of the common $t$-statistics from the basic
OLS regressions (\ref{eq:unrstricted_Y}) and (\ref{eq:unrstricted_I}) for
testing $\theta _{1}=0$ or $\theta _{2}=0:$ $\left\vert t\right\vert
_{(1)}=\min \left\{ \left\vert t_{1}\right\vert ,\left\vert t_{2}\right\vert
\right\} $, $\left\vert t\right\vert _{(2)}=\max \left\{ \left\vert
t_{1}\right\vert ,\left\vert t_{2}\right\vert \right\} $

\item Reject if $\left\vert t\right\vert _{(1)}>g(\left\vert t\right\vert
_{(2)})$ using Table \ref{table:g_values_intro} (or employing the code in
Appendix \ref{sec:AppendixRCode}).
\end{enumerate}

\begin{table}[h]
\par
\begin{center}
\vspace*{2mm} {\footnotesize 
\begin{tabular}{|c|cccccccccc|}
\hline
$t$ & 0 & 0.01 & 0.02 & 0.03 & 0.04 & 0.05 & 0.06 & 0.07 & 0.08 & 0.09 \\ 
\hline
0.0 & 0 & 0.01 & 0.02 & 0.03 & 0.04 & 0.05 & 0.06 & 0.07 & 0.08 & 0.09 \\ 
0.1 & 0.1 & 0.10672 & 0.10672 & 0.10672 & 0.10672 & 0.10672 & 0.11672 & 
0.12671 & 0.13670 & 0.14669 \\ 
0.2 & 0.15669 & 0.16668 & 0.17667 & 0.18666 & 0.19666 & 0.20665 & 0.21664 & 
0.22663 & 0.23663 & 0.24662 \\ 
0.3 & 0.25661 & 0.26660 & 0.27660 & 0.28659 & 0.29658 & 0.30658 & 0.31657 & 
0.32656 & 0.33655 & 0.34655 \\ 
0.4 & 0.35654 & 0.36653 & 0.37652 & 0.38652 & 0.39651 & 0.40650 & 0.41649 & 
0.42649 & 0.43648 & 0.44647 \\ 
0.5 & 0.45646 & 0.46646 & 0.47645 & 0.48644 & 0.49643 & 0.50643 & 0.51642 & 
0.52641 & 0.53640 & 0.54640 \\ 
0.6 & 0.55639 & 0.56638 & 0.57637 & 0.58637 & 0.59636 & 0.60635 & 0.61634 & 
0.62634 & 0.63633 & 0.64632 \\ 
0.7 & 0.65631 & 0.66631 & 0.67630 & 0.68629 & 0.69628 & 0.70628 & 0.71627 & 
0.72626 & 0.73625 & 0.74625 \\ 
0.8 & 0.75624 & 0.76623 & 0.77622 & 0.78622 & 0.79621 & 0.80620 & 0.81620 & 
0.82619 & 0.83618 & 0.84617 \\ 
0.9 & 0.85617 & 0.86616 & 0.87615 & 0.88614 & 0.89614 & 0.90613 & 0.91612 & 
0.92611 & 0.93611 & 0.94610 \\ 
1.0 & 0.95609 & 0.96608 & 0.97608 & 0.98607 & 0.99606 & 1.00605 & 1.01605 & 
1.02604 & 1.03603 & 1.04602 \\ 
1.1 & 1.05602 & 1.06601 & 1.07600 & 1.08599 & 1.09599 & 1.10598 & 1.11597 & 
1.12596 & 1.13596 & 1.14595 \\ 
1.2 & 1.15594 & 1.16593 & 1.17593 & 1.18592 & 1.19591 & 1.20590 & 1.21590 & 
1.22589 & 1.23588 & 1.24587 \\ 
1.3 & 1.25587 & 1.26586 & 1.27585 & 1.28584 & 1.29584 & 1.30583 & 1.31286 & 
1.31310 & 1.31310 & 1.31310 \\ 
1.4 & 1.31310 & 1.31310 & 1.31310 & 1.31310 & 1.31310 & 1.31750 & 1.32750 & 
1.33750 & 1.34750 & 1.35750 \\ 
1.5 & 1.36750 & 1.37750 & 1.38750 & 1.39750 & 1.40750 & 1.41750 & 1.42750 & 
1.43750 & 1.44750 & 1.45750 \\ 
1.6 & 1.46750 & 1.47750 & 1.48750 & 1.49750 & 1.50750 & 1.51750 & 1.52750 & 
1.53750 & 1.54750 & 1.55750 \\ 
1.7 & 1.56750 & 1.57750 & 1.58750 & 1.59750 & 1.60750 & 1.61750 & 1.62750 & 
1.63750 & 1.64750 & 1.65750 \\ 
1.8 & 1.66750 & 1.67750 & 1.68750 & 1.69750 & 1.70750 & 1.71750 & 1.72750 & 
1.73750 & 1.74750 & 1.75750 \\ 
1.9 & 1.76750 & 1.77750 & 1.78750 & 1.79750 & 1.80750 & 1.81750 & 1.82750 & 
1.83750 & 1.84750 & 1.85750 \\ 
2.0 & 1.86750 & 1.87750 & 1.88750 & 1.89750 & 1.90750 & 1.91750 & 1.92750 & 
1.93750 & 1.94750 & 1.95750 \\ 
2.1 & 1.95996 & 1.95996 & 1.95996 & 1.95996 & 1.95996 & 1.95996 & 1.95996 & 
1.95996 & 1.95996 & 1.95996 \\ \hline
\end{tabular}
}
\end{center}
\par
\vspace*{-6mm}
\caption{$g$-function. Entries are $g(t)$ values for $t=$ value first
column+value first row. Reject if smallest absolute $t$-statistic is larger
than $g\left( \text{largest absolute }t\text{-statistic}\right) $ using
ordinary $t$-statistics from OLS regressions (\protect\ref{eq:unrstricted_Y}%
) and (\protect\ref{eq:unrstricted_I}). E.g. if $t_{1}=2.052$ and $%
t_{2}=-1.941$ then $1.941>1.9175=g(2.05)$ and reject. Linear interpolation
results in $\left\vert NRP-5\%\right\vert <0.00001.$}
\label{table:g_values_intro}
\end{table}

That the test can be based on elementary $t$-statistics is important for
ease of application, but is theoretically justified below by sufficiency and
invariance arguments. The testing problem is invariant to permutations of
parameters and statistics and sign changes. We show that the ordered
absolute $t$-statistic, $(\left\vert t\right\vert _{(1)},\left\vert
t\right\vert _{(2)}),$ is a maximal invariant. The critical region is a
subset of the relevant sample space which is an octant in $\mathbb{R}^{2}$.
This can also be justified asymptotically under very weak assumptions and
different estimation methods.

The new test is constructed by varying the boundary of the critical region,
defined as a function $g$, such that the test is almost similar and
minimizing the distance to the power envelope surface. It is based on a new
general, so called varying-$g$ method that can be applied to other testing
problems with nuisance parameters more generally to obtain near similar
tests. It does not require a choice of mixture distribution, nor the
construction of least favorable distributions, cf. \citet{Andrews1994}, %
\citet{Andrews2006,Andrews2008}, \citet{Elliott2015}, \cite{Guggenberger2019}%
. It can be given a random critical value interpretation as in %
\citet{moreira2016critical} since any critical region in a
higher-dimensional space has a boundary for one statistic in terms of the
remaining statistics. The critical region that we construct is fixed
however, not random, avoids simulation, and our approach appears to lend
itself better to multivariate extensions, as will be shown for dimension
three.

We use and develop numerical methods that avoid simulations and use
numerical integration instead. With the required computing completed for the
mediation problem, practitioners can simply use Table \ref%
{table:g_values_intro} or the computer code provided in the appendix. In
fact no further reading on the motivation and derivation of the test is
required for its implementation.

Section \ref{sec:TeachApplication} shows the ease of implementation using an
interesting application by \citet{Alan2018} on educational attainment. The
test confirms that the negative effect on girls' attainment of 1-year
exposure to teachers with traditional attitudes, is mediated through
students' own gender role beliefs. Neither the procedure in \citet{Alan2018}
nor the LR test reject the no mediation hypothesis in this case, but our new
test does.

A further empirical illustration on union sentiment among southern nonunion
textile workers is provided in Section \ref{sec:EmpiricalIllustration}.
Different mediation channels are tested involving two mediating variables
and requires an extension of our methods. We therefore consider general
hypotheses of the form $H_{0}:\theta _{1}\cdots \theta _{K}=0$. We give the
relevant distributions of maximal invariants that can be used to derive the
critical regions that are nearly similar and do so explicitly for three
dimensions.

For practitioners the major advantage of our test is that there is a better
chance of formally showing that there is a mediation effect. Our test has
better power, especially when the two channeling effects are small or less
accurately estimated. Given the enormous interest in testing for mediation
and the fact that our test can have close to $5\%$ more power than standard$%
\ $level $5\%$ tests, many unpublished examples will exist where it can now
be concluded that there is a statistically significant mediation effect.

\section{Theory}

\label{sec:Theory2}

The joint density of $\left( Y,M\right) $ given $X$ in equations (\ref%
{eq:unrstricted_Y}) and (\ref{eq:unrstricted_I}) can be written as: $\ $%
\begin{equation*}
f_{Y,M|X}\left( y,m|x;\lambda \right) =f_{Y|M,X}\left( y|m,x;\lambda
_{1}\right) f_{M|X}\left( m|x;\lambda _{2}\right) ,
\end{equation*}%
with $\lambda _{1}=\left( \tau ,\theta _{2},\sigma _{11}\right) ^{\prime }$, 
$\lambda _{2}=\left( \theta _{1},\sigma _{22}\right) ^{\prime }$, and $%
\lambda =\left( \lambda _{1}^{\prime },\lambda _{2}^{\prime }\right)
^{\prime }$. The parameters $\lambda _{1}$, and $\lambda _{2}$ vary freely
as a result of the triangular structure of the model. The mediation variable
is the endogenous variable in (\ref{eq:unrstricted_I}), but is exogenous for 
$\theta _{2}$ in (\ref{eq:unrstricted_Y}) since $Y$ is not causal for $M$.
For a sample of $n$ independent observations the loglikelihood equals the
sum of two normal loglikelihoods corresponding to (\ref{eq:unrstricted_Y})
and (\ref{eq:unrstricted_I})\footnote{%
This can easily be extended to include more regressors/covariates.
Instrumental variables can also be used, but note that $X$ and $M$ appear in
both equations and in the standard setup $u$ and $v$ are independent because
of the experimental interpretation of $M$. All that is required is the
asymptotic normality of the estimators and $t$-statistics.}:

\begin{equation}
\ell \left( \lambda \right) \propto -\frac{1}{2\sigma _{11}}%
\sum_{i=1}^{n}\left( y_{i}-\tau x_{i}-\theta _{2}m_{i}\right) ^{2}-\frac{n}{2%
}\log \left( \sigma _{11}\right) -\frac{1}{2\sigma _{22}}\sum_{i=1}^{n}%
\left( m_{i}-\theta _{1}x_{i}\right) ^{2}-\frac{n}{2}\log \left( \sigma
_{22}\right) .  \label{eq:loglik}
\end{equation}%
As a consequence the Maximum Likelihood Estimators (MLEs) for $\theta _{1}$
and $\theta _{2}$ are the basic OLS\ estimators for the two equations
separately. Furthermore, both observed and expected Fisher information
matrices will be block diagonal in terms of $\lambda _{1}$ and $\lambda _{2}$
as well as in $\left( \tau ,\theta _{2}\right) ^{\prime }$, $\sigma _{11}$, $%
\theta _{1},$ and $\sigma _{22}.$ As a result the standard $t$-statistics $%
T_{1}$ and $T_{2}$ for $\theta _{1}$ and $\theta _{2}$ respectively are
asymptotically independent and normally distributed with means $\mu
_{1}\equiv \theta _{1}^{0}/\sigma _{\theta _{1}}$, $\mu _{2}\equiv \theta
_{2}^{0}/\sigma _{\theta _{2}}$ where $\theta _{1}^{0}$, $\theta _{2}^{0}$
denote the true parameter values and $\sigma _{\theta _{1}}$, $\sigma
_{\theta _{2}}$ the standard deviations of the OLS estimators: $\ \left(
T-\mu \right) \overset{d}{\rightarrow }\emph{N}\left( 0,I_{2}\right) $, with 
$T=\left( T_{1},T_{2}\right) ^{\prime }.$

Restricting attention to $T$ can be justified by statistical sufficiency,
since the MLE\ is minimal sufficient and complete, and by invariance since
the values of $\sigma _{11},\sigma _{22}$ and $\tau $ do not affect whether $%
H_{0}:$ $\theta _{1}\theta _{2}=0$ is true or not. %
\citet{HillierVanGarderenVanGiersbergen2021} show that $T$ is a maximal
invariant under a relevant group of transformations. The testing problem has
two more obvious symmetries. The problem is not affected by sign changes or
permutations. This also holds in higher dimensions, e.g. when mediation is
through a chain of effects as in our empirical illustration. If $%
X\rightarrow M^{\left( 0\right) }\rightarrow \cdots \rightarrow M^{\left(
K-1\right) }\rightarrow Y,$ \ then $K$ parameters are required to be
non-zero for this channel to operate. In $K$ dimensions the null hypothesis
that at least one parameter is zero and the alternative is all $K$
parameters non-zero:%
\begin{align*}
H_{0}& :\theta _{1}\theta _{2}...\theta _{K}=0 \\
H_{1}& :\theta _{1}\theta _{2}...\theta _{K}\neq 0
\end{align*}%
There are many other hypotheses including collapsibility in contingency
tables, testing indirect effects, channels in DAGs, see \citet{Dufour2017}
for a range of examples.

In the multivariate setting we also assume that the estimator $\hat{\theta}$
is normally distributed with known covariance matrix $\Omega =diag\left(
\sigma _{1}^{2},...,\sigma _{K}^{2}\right) $. Hence, if we let $T=\Omega
^{-1/2}\hat{\theta}$ and $\mu =\Omega ^{-1/2}\theta ^{0}$ such that $T_{k}=%
\hat{\theta}_{k}/\sigma _{\theta _{k}}$ are the $t$-ratios and $\mu
_{k}=\theta _{k}/\sigma _{\theta _{k}}$ are the non-centrality parameters,
we assume:

\begin{assumption}
\label{As:Gaussianity}:$\qquad T\sim N\left( \mu ,I_{K}\right) .$
\end{assumption}

The testing problem is invariant to reordering the parameters (permutations)
and sign changes (reflections) of the $K$ parameters $\left\{ \theta
_{i}\right\} _{i=1}^{K}$. The group of permutations, $\mathbf{G}_{1}$ say,
has $K!$ elements and the group of sign changes, $\mathbf{G}_{2}$ say, has $%
2^{K}$ elements (two possible signs for each element). The groups $\mathbf{G}%
_{1}$ and $\mathbf{G}_{2}$ have only the identity element in common, but are
otherwise non-overlapping. The full group $\mathbf{G=G}_{1}\mathbf{\times G}%
_{2}$ generated by $\mathbf{G}_{1}$ and $\mathbf{G}_{2}$ therefore has $%
K!2^{K}$ elements. Exploiting the invariance and symmetry properties of the
problem reduces the domain of integration by a factor $K!2^{K}$. This is
important because all optimizations require probabilities calculated by
numerical integration. The density after a sign change in $T_{k}$ is
obtained by a corresponding sign change in $\mu _{k}$ and for a permutation
of $T$ also $\mu $ permutes accordingly. Hence for any element \textbf{$g$}$%
\in \mathbf{G}$ we have $\mathbf{g}\cdot T\sim $ $\emph{N}\left( \mathbf{g}%
\cdot \mu ,I_{K}\right) $ or $P_{\mathbf{g}\mu }\left[ \mathbf{g}T\in A%
\right] =P_{\mu }\left[ T\in A\right] $ so the distribution is invariant;
see \citet{Lehmann2005}.

\begin{theorem}
\label{Th:MaxInvAbsOrderStatK}The testing problem $H_{0}:\mu _{1}\mu
_{2}\cdots \mu _{K}=0$ is invariant under the group of transformations \ $%
\mathbf{G=G}_{1}\mathbf{\times G}_{2}$ acting on $T$ and $\mu $, given
Assumption \ref{As:Gaussianity}.\ The absolute order statistic $\left(
\left\vert T\right\vert _{\left( 1\right) },...,\left\vert T\right\vert
_{\left( K\right) }\right) $ with $0<\left\vert T\right\vert _{\left(
1\right) }<\left\vert T\right\vert _{\left( 2\right) }<...<\left\vert
T\right\vert _{\left( K\right) }$ is a maximal invariant statistic and the
absolute order parameter $\left( \left\vert \mu \right\vert _{\left(
1\right) },...,\left\vert \mu \right\vert _{\left( K\right) }\right) $ with $%
0\leq \left\vert \mu \right\vert _{\left( 1\right) }\leq \cdots \leq
\left\vert \mu \right\vert _{\left( K\right) }$ is a maximal invariant
parameter under the group of transformations\textbf{\ }$\mathbf{G}=\mathbf{G}%
_{1}\times \mathbf{G}_{2}.$ The distribution of $\left( \left\vert
T\right\vert _{\left( 1\right) },...,\left\vert T\right\vert _{\left(
K\right) }\right) $ depends only on $\left( \left\vert \mu \right\vert
_{\left( 1\right) },...,\left\vert \mu \right\vert _{\left( K\right)
}\right) .$
\end{theorem}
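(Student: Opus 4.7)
The plan is to establish the four sub-claims of the theorem in the natural order: invariance of the testing problem, invariance of the absolute order statistic, maximality (of both the statistic and the parameter), and the consequent dependence of the distribution on $\mu$ only through $|\mu|_{(\cdot)}$.

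For invariance of the problem, the text preceding the theorem already observes that $\mathbf{g}\cdot T\sim N(\mathbf{g}\cdot\mu,I_K)$, so the family $\{N(\mu,I_K):\mu\in\mathbb{R}^K\}$ is $\mathbf{G}$-invariant. I would only need to add that the null set $\{\mu:\mu_1\mu_2\cdots\mu_K=0\}$ is itself preserved by $\mathbf{G}$, which is immediate: a permutation does not change the multiset of coordinates of $\mu$, and a sign flip on coordinate $k$ sends $\prod_i \mu_i$ to $\pm\prod_i\mu_i$. Hence $\prod_i\mu_i=0$ iff $\prod_i(\mathbf{g}\mu)_i=0$.

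For the maximal invariant statistic, invariance of $|T|_{(\cdot)}$ under $\mathbf{G}$ is immediate since permutations do not alter the multiset $\{|T_1|,\dots,|T_K|\}$ and sign changes do not alter absolute values. Maximality is the substantive step: suppose $T,T'\in\mathbb{R}^K$ satisfy $|T|_{(k)}=|T'|_{(k)}$ for all $k$. Let $\pi,\pi'\in\mathbf{G}_1$ be permutations that sort $|T|$ and $|T'|$ in increasing order of absolute value; then $\sigma=\pi'\circ\pi^{-1}$ satisfies $|T_{\sigma^{-1}(k)}|=|T'_k|$. Define $\epsilon\in\{-1,+1\}^K$ by $\epsilon_k=\mathrm{sgn}(T'_k)\mathrm{sgn}(T_{\sigma^{-1}(k)})$ (with $\epsilon_k=+1$ if either factor vanishes); then the element $\mathbf{g}=(\sigma,\epsilon)\in\mathbf{G}$ sends $T$ to $T'$. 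Thus $|T|_{(\cdot)}$ separates the orbits, which is the defining property of a maximal invariant. The identical combinatorial argument applied on the parameter space yields that $|\mu|_{(\cdot)}$ is a maximal invariant parameter; the weak inequalities in the parameter ordering simply reflect that equal or zero entries of $\mu$ inflate the stabilizer of $\mu$ in $\mathbf{G}$ but do not obstruct existence of a suitable $\mathbf{g}$.

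The final distributional claim follows from the general principle (see e.g. Lehmann \& Romano, Theorem 6.3.2) that if the family of distributions is invariant under $\mathbf{G}$, then the distribution of any maximal invariant statistic depends on $\mu$ only through a maximal invariant parameter: for any $\mu,\mu'$ with $|\mu|_{(\cdot)}=|\mu'|_{(\cdot)}$, there exists $\mathbf{g}\in\mathbf{G}$ with $\mathbf{g}\mu=\mu'$, so
\[
P_{\mu'}\bigl[|T|_{(\cdot)}\in B\bigr]=P_{\mathbf{g}\mu}\bigl[|T|_{(\cdot)}\in B\bigr]=P_{\mu}\bigl[|\mathbf{g}^{-1}T|_{(\cdot)}\in B\bigr]=P_{\mu}\bigl[|T|_{(\cdot)}\in B\bigr],
\]
using invariance of the family in the second equality and invariance of $|T|_{(\cdot)}$ in the third.

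The only delicate point is the handling of ties and zeros. For the statistic, strict inequalities $0<|T|_{(1)}<\cdots<|T|_{(K)}$ hold $P_\mu$-a.s. because $T$ has a Lebesgue density, so measure-theoretic subtleties disappear. For the parameter, ties and zeros must be allowed (hence the weak inequalities), but this only enlarges the stabilizer and never obstructs the construction of $\mathbf{g}$ above; existence, not uniqueness, is what maximality requires. I expect this bookkeeping to be the main — and only real — obstacle, and it is addressed by the explicit construction of $(\sigma,\epsilon)$ given above.
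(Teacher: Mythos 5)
Your proposal is correct and follows essentially the same route as the paper's own proof: invariance of $\left(\left\vert T\right\vert_{(1)},\ldots,\left\vert T\right\vert_{(K)}\right)$ is immediate, maximality is shown by exhibiting a group element (permutation plus sign changes) relating any two points with the same absolute order statistic, the same argument gives the maximal invariant parameter, and the distributional claim is the standard property of maximal invariants under an invariant family. You simply spell out details the paper leaves implicit (the explicit $(\sigma,\epsilon)$ construction, preservation of the null set, and the chain of equalities for the distribution), which is fine.
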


The Wald and LR tests are functions of the maximal invariant and so will our
new test. The density is required for probability calculations and
optimizations for the new test. It is easily derived for arbitrary dimension
using Equation (6) of \citet{Vaughan1972}:

\begin{lemma}
\label{Lem:MaxInvAbsOrderStatKdistribution}The probability density function
of the absolute order statistic is given by:%
\begin{equation}
f_{\left\{ \left\vert T\right\vert _{\left( 1\right) },...,\left\vert
T\right\vert _{\left( K\right) }\right\} }\left( \left\vert t\right\vert
_{\left( 1\right) },...,\left\vert t\right\vert _{\left( K\right) }\right)
=perm\left( 
\begin{array}{ccc}
\chi \left( \left\vert t\right\vert _{\left( 1\right) },\left\vert \mu
\right\vert _{\left( 1\right) }\right) & \cdots & \chi \left( \left\vert
t\right\vert _{\left( 1\right) },\left\vert \mu \right\vert _{\left(
K\right) }\right) \\ 
\vdots &  & \vdots \\ 
\chi \left( \left\vert t\right\vert _{\left( K\right) },\left\vert \mu
\right\vert _{\left( 1\right) }\right) & \cdots & \chi \left( \left\vert
t\right\vert _{\left( K\right) },\left\vert \mu \right\vert _{\left(
K\right) }\right)%
\end{array}%
\right) ,  \label{eq:perm-density}
\end{equation}%
with $perm\left( A\right) $ the permanent\footnote{%
The permanent is defined as $perm\left( A\right) =\sum_{\sigma \in
S_{n}}\prod_{i=1}^{n}a_{i,\sigma \left( i\right) }$ with the sum over all
permutations $\sigma $ of the numbers $1,...,n,$ akin the determinant but
without the $\pm $ signature of the permutation.} of the square matrix $A$
and $\chi \left( x,\mu \right) $ the noncentral Chi-distribution with one
degree of freedom and noncentrality parameter $\mu >0$.
\end{lemma}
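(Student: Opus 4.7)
The plan is to reduce the statement to a direct application of the Vaughan (1972) order-statistic density formula, after checking that the marginals and independence are what one expects. First, I would observe that since $T \sim \emph{N}(\mu, I_{K})$ has independent coordinates, the absolute values $|T_{1}|, \ldots, |T_{K}|$ are also mutually independent. For each $k$, the scalar $T_{k} \sim \emph{N}(\mu_{k},1)$ satisfies $|T_{k}| \stackrel{d}{=} |T_{k}|$ when $\mu_{k}$ is replaced by $-\mu_{k}$, so the marginal density of $|T_{k}|$ depends on $\mu_{k}$ only through $|\mu_{k}|$, and is precisely the noncentral chi density $\chi(\cdot, |\mu_{k}|)$ with one degree of freedom and noncentrality $|\mu_{k}|$. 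So $(|T_{1}|,\ldots,|T_{K}|)$ is a vector of $K$ independent but generally non-identically distributed random variables with known marginals.

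Next, I would invoke the general formula of Vaughan (1972, Eq.~(6)) for the joint density of the order statistics of independent (but not necessarily identically distributed) random variables: if $Z_{1},\ldots,Z_{K}$ are independent with densities $f_{1},\ldots,f_{K}$, then the density of the order statistics $Z_{(1)}<\cdots<Z_{(K)}$ is the permanent of the $K\times K$ matrix $[\,f_{k}(z_{(i)})\,]_{i,k}$, on the region $z_{(1)}<\cdots<z_{(K)}$. This holds with probability one since ties occur with probability zero for absolutely continuous marginals. Applied with $f_{k} = \chi(\cdot, |\mu_{k}|)$, this immediately gives
\begin{equation*}
f_{(|T|_{(1)},\ldots,|T|_{(K)})}\bigl(|t|_{(1)},\ldots,|t|_{(K)}\bigr)
= \text{perm}\Bigl(\bigl[\chi(|t|_{(i)}, |\mu_{k}|)\bigr]_{i,k=1}^{K}\Bigr).
\end{equation*}

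Finally, I would replace the columns of this matrix indexed by the unsorted $|\mu_{k}|$'s with columns indexed by the sorted $|\mu|_{(j)}$'s. This is legitimate because the permanent, unlike the determinant, is fully invariant under permutations of its rows or columns (the sum in its definition already runs over all permutations, so relabeling columns only reindexes the sum). This yields the form stated in the lemma.

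The key facts are essentially bookkeeping, so I do not anticipate a substantive obstacle; the only point that merits care is the first one, namely verifying that $|\mu_{k}|$ (rather than $\mu_{k}$) is the correct noncentrality parameter for the folded normal, which is immediate from the symmetry $\chi(t,\mu)=\chi(t,-\mu)$ inherited from the squared form $t^{2}-2\mu t + \mu^{2}$ appearing in the density of $T_{k}^{2}$. Conceptually, Theorem~\ref{Th:MaxInvAbsOrderStatK} guarantees that the absolute order statistic is maximal invariant and that its distribution depends only on the absolute order parameter, so the sorted-column representation is the natural one consistent with that invariance structure.
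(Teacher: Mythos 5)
Your proposal is correct and follows essentially the same route as the paper's proof: identify the marginals of $\left\vert T_{k}\right\vert$ as independent noncentral Chi (folded normal) distributions with one degree of freedom and then apply Vaughan's (1972) Equation (6) permanent formula for order statistics of independent, non-identically distributed variables. Your added remarks on why $\left\vert \mu _{k}\right\vert$ is the relevant noncentrality parameter and why sorting the parameter columns is harmless (column-permutation invariance of the permanent) are details the paper leaves implicit, but they do not change the argument.
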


The noncentral Chi-distribution $\chi \left( t,\mu \right) $ with one degree
of freedom equals the folded normal distribution and if $T\sim N\left( \mu
,1\right) ,$ then the density of $\left\vert T\right\vert $ can be written
as $f_{|T|}(t,\mu )=\,\sqrt{2/\pi }\exp \{-\tfrac{1}{2}(t^{2}+\mu
^{2})\}\cosh (\mu \ t),$ for $t\geq 0$. Substitution in Lemma \ref%
{Lem:MaxInvAbsOrderStatKdistribution} and simplifying gives the following
result that is the basis for the numerical calculations that follow:

\begin{lemma}
\label{Lem:pdf_2_order_t_statistics}The density of the ordered absolute $t$%
-statistics for the mediation hypothesis is:%
\begin{eqnarray*}
f_{\left\{ \left\vert T\right\vert _{\left( 1\right) },\left\vert
T\right\vert _{\left( 2\right) }\right\} }\left( t_{1},t_{2};\mu _{1},\mu
_{2}\right) &=&\frac{2}{\pi }\exp \left\{ -(t_{1}^{2}+t_{2}^{2}+\mu
_{1}^{2}+\mu _{1}^{2})/2\right\} \\
&&\left\{ \cosh (\mu _{1}t_{1})\cosh (\mu _{2}t_{2})+\cosh (\mu
_{1}t_{2})\cosh (\mu _{2}t_{1})\right\} , \\
&&\text{ \ \ for }t_{2}\geq t_{1}\geq 0~\text{and }\mu _{1}=\theta
_{1}/\sigma _{\theta _{1}},\mu _{2}=\theta _{2}/\sigma _{\theta _{2}}.
\end{eqnarray*}
\end{lemma}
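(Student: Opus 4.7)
The plan is to specialize Lemma~\ref{Lem:MaxInvAbsOrderStatKdistribution} to $K=2$ and substitute the explicit folded-normal expression for $\chi$ that is given just before the lemma statement. First I would expand the permanent of the $2\times 2$ matrix. Unlike the determinant, $\mathrm{perm}\bigl(\begin{smallmatrix} a & b \\ c & d\end{smallmatrix}\bigr)=ad+bc$, so
\[
f = \chi\bigl(|t|_{(1)},|\mu|_{(1)}\bigr)\,\chi\bigl(|t|_{(2)},|\mu|_{(2)}\bigr)+\chi\bigl(|t|_{(1)},|\mu|_{(2)}\bigr)\,\chi\bigl(|t|_{(2)},|\mu|_{(1)}\bigr).
\]

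Next I would invoke $\chi(t,\mu)=\sqrt{2/\pi}\,\exp\{-(t^{2}+\mu^{2})/2\}\cosh(\mu t)$ for $t\ge 0$. Each summand above is then a product of two such factors, and with $t_{1}=|t|_{(1)}$, $t_{2}=|t|_{(2)}$ both summands carry the identical prefactor $(2/\pi)\exp\{-(t_{1}^{2}+t_{2}^{2}+\mu_{1}^{2}+\mu_{2}^{2})/2\}$. Pulling this constant out leaves the bracketed sum $\cosh(\mu_{1}t_{1})\cosh(\mu_{2}t_{2})+\cosh(\mu_{1}t_{2})\cosh(\mu_{2}t_{1})$, which is exactly the claimed expression.

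The only point worth flagging is notational: the statement writes $(\mu_{1},\mu_{2})$ in place of $(|\mu|_{(1)},|\mu|_{(2)})$, and the signs of the $\mu_{k}$ are those of $\theta_{k}/\sigma_{\theta_{k}}$ rather than their absolute values. This substitution is harmless because the bracketed expression is symmetric under the swap $\mu_{1}\leftrightarrow\mu_{2}$ and even in each argument (since $\cosh$ is even), so it is invariant under the group $\mathbf{G}$ of Theorem~\ref{Th:MaxInvAbsOrderStatK} and agrees with the evaluation at the ordered absolute values. The domain restriction $t_{2}\ge t_{1}\ge 0$ is simply inherited from the support of the absolute order statistic.

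No step presents a genuine obstacle; the argument is bookkeeping. The two pitfalls are (i) confusing the permanent with the determinant, which would flip the sign between the two summands and destroy positivity, and (ii) failing to notice the symmetry/evenness that licenses writing $\mu_{1},\mu_{2}$ rather than $|\mu|_{(1)},|\mu|_{(2)}$ on the right-hand side.
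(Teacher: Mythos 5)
Your proposal is correct and matches the paper's own derivation, which likewise obtains the result by specializing Lemma \ref{Lem:MaxInvAbsOrderStatKdistribution} to $K=2$, expanding the $2\times 2$ permanent, and substituting the folded-normal density $\sqrt{2/\pi}\exp\{-(t^{2}+\mu^{2})/2\}\cosh(\mu t)$, with the evenness and symmetry of $\cosh$ justifying writing $\mu_{1},\mu_{2}$ in place of the ordered absolute values. Note also that your common prefactor $\exp\{-(t_{1}^{2}+t_{2}^{2}+\mu_{1}^{2}+\mu_{2}^{2})/2\}$ is the intended one; the exponent $\mu_{1}^{2}+\mu_{1}^{2}$ in the lemma as printed is a typographical slip.
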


The ordered squared $t$-statistic could also be used as maximal invariant.
Lemma \ref{Lem:MaxInvAbsOrderStatKdistribution} would then lead to a density
in terms of noncentral Chi-squared distributions.

\subsection{Problems with Standard (Single) Mediation Test Statistics}

Standard mediation test statistics used in practice have distributions that
depend on the parameter values under the null. The rejection probabilities
are therefore not constant and the tests are biased with power dropping
below the size of the test, especially in a neighborhood of the origin. We
illustrate the issue for the classic Wald and LR tests.

The null hypothesis $\theta _{1}\theta _{2}=0$ defines a manifold that is
almost everywhere continuously differentiable, with the exception of the
origin which is a so-called \textquotedblleft double point" where the two
restrictions $\theta _{1}=0$ and $\theta _{2}=0$, each defining a
one-dimensional line, coincide. The widely used \citet{Sobel1982} test
equals the square root of the Wald test. \citet{Glonek1993} derives the
asymptotic distribution for the Wald test statistic:%
\begin{equation}
W=\frac{T_{1}^{2}T_{2}^{2}}{T_{1}^{2}+T_{2}^{2}}\overset{d}{\rightarrow }%
\left\{ 
\begin{array}{l}
\ \chi _{1}^{2}:\text{if }\theta _{1}=0\text{ or }\theta _{2}=0,\text{ but
not both,} \\ 
\ \frac{1}{4}\chi _{1}^{2}:\text{if }\theta _{1}=\theta _{2}=0.%
\end{array}%
\right.  \label{eq:Wald}
\end{equation}%
As a consequence the asymptotic $5\%$ critical value for $W$ when both $%
\theta _{1}=0$ and $\theta _{2}=0$ is $\frac{1}{4}\chi _{1}^{2}(0.95),$ but
jumps to $\chi _{1}^{2}(0.95)$ i.e. the usual Chi-squared critical value for
one restriction, for any other value. The discrete jump in the asymptotic
distribution from the origin to any other fixed parameter is remarkable and
shows explicitly that the distribution depends heavily on the parameter
values under the null. This discontinuity in the asymptotic distribution\
and dependence on the parameter also invalidates bootstrap procedures and
they are oversized. For an NRP of $5\%$ at the origin the critical value
should be $0.96$ but this would lead to over-rejection for other values
under the null and the test would be oversized (size $>30\%$) and invalid.
One could consider drifting sequences of parameter values to investigate the
behavior of the Wald statistic near the origin, but that would not solve the
problem. The problematic behavior of the Wald test under the null with
singularities is well documented by \citet{DrtonXiao2016} and %
\citet{drton2009likelihood}. No satisfactory solution has been found in the
preceding decades to salvage the Wald statistic, see e.g. \citet{Dufour2017}%
. This prompted our investigation and to propose an alternative solution.

The LR test was shown by \citet{vanGiersbergen2014}\ to equal:%
\begin{equation}
LR=\min \left\{ \left\vert T_{1}\right\vert ,\left\vert T_{2}\right\vert
\right\} =\left\vert T\right\vert _{\left( 1\right) },  \label{eq:LR}
\end{equation}%
and rejects when both $H_{0}^{\theta _{1}}:\theta _{1}=0$ and $H_{0}^{\theta
_{2}}:\theta _{2}=0$ are rejected by basic $t$-tests. In %
\citet{Mackinnon2002} this is referred to as the test for joint
significance, but not identified as the LR test. The rejection probability
for critical value $cv$ is: 
\begin{equation*}
P\left[ LR\geq cv\right] =P\left[ \ \left\vert T_{1}\right\vert \geq cv\
\cap \ \left\vert T_{2}\right\vert \geq cv\right] =P\left[ \ \left\vert
T_{1}\right\vert \geq cv\right] \ \cdot P\left[ \ \left\vert
T_{2}\right\vert \geq cv\right] ,
\end{equation*}%
by independence of $T_{1}$ and $T_{2}$. These rejection probabilities are
monotonically increasing in the absolute values of $\theta _{1}$ and $\theta
_{2}$. Correct size is therefore obtained by choosing the critical value of
the test by letting $\theta _{1}\rightarrow \infty $ when $\theta _{2}=0,$
or $\theta _{2}\rightarrow \infty $ if $\theta _{1}=0,$ to guarantee that
the rejection probability under the null is always smaller than or equal to
the nominal size. The asymptotic $5\%$ critical value is therefore the usual 
$1.96$. The NRP will depend on the values of $\theta _{1}$ and $\theta _{2}$
and vary between the following two extremes: 
\begin{equation*}
P\left[ LR\geq z_{0.025}\right] =\left\{ 
\begin{array}{l}
0.05:\text{ if }\theta _{1}\rightarrow \infty \text{ }\wedge \theta _{2}=0,%
\text{or }\theta _{2}\rightarrow \infty \wedge \theta _{1}=0, \\ 
0.05^{2}=0.0025:\text{ if }\theta _{1}=0\wedge \theta _{2}=0,%
\end{array}%
\right.
\end{equation*}%
where $z_{0.025}$ is the upper $2.5\%$ percentile of the standard normal
distribution. For an NRP of $5\%$ at the origin $\left( \theta _{1},\theta
_{2}\right) =\left( 0,0\right) ,$ the critical value should equal $%
cv_{LR00}=1.217$. This leads to massive over-rejection if only one parameter
is zero and the other much larger. The test with this critical value is
oversized $($size $>20\%)$ and invalid.

The third classic test, the Lagrange Multiplier (LM) or score test, is even
more problematic because its definition depends on parameter values under
the null and there are three different versions depending on which $\theta $%
, or both $\theta $'s are zero.

All these classic tests are functions of two $t$-statistics. Their
distributions, as well as their NRPs, clearly depend on the parameter values
under the null and the tests are not similar. A test is called \emph{similar
on the boundary} of $H_{0}$ if the probability of rejecting the null is
constant for all parameter values on the boundary of $H_{0}$ and $H_{1}$.
For mediation, this boundary equals $H_{0}$ itself and consists of the
horizontal and vertical axes of the $\left( \theta _{1},\theta _{2}\right) $
space. None of the classic tests is similar and in a neighborhood of the
origin the NRPs are close to zero. As a result the power in a neighborhood
of the origin is also close to zero and far below the size of the test and
the tests are biased since there are parameter values with probability of
rejection under the alternative lower than under the null.

\subsection{Critical Regions}

The behavior and construction of the classic test statistics is problematic.
Given that no satisfactory adjustments of classic test statistics have been
found, despite considerable efforts over recent decades, a different
approach is required.

In order to derive an alternative test procedure we shift the focus from the 
\emph{test statistic} to the \emph{critical region} (CR). A critical region
defines a test statistic of course, but choosing a class of tests, such as
Wald, LR, or LM tests, restricts the shape of the critical region. For the
same reason the tests focusing on improving the standard error of $\hat{%
\theta}_{1}\hat{\theta}_{2}$ or $\left( \hat{\tau}^{\ast }-\hat{\tau}\right) 
$ analyzed in \citet{Mackinnon2002} restrict possible shapes of the critial
region.

We construct a new test procedure by constructing the critical region
directly by determining its shape in the two-dimensional sample space of the 
$t$-statistics used in the construction of the tests. We consider critical
regions that are bounded by a function $g$ and reject when $\left\vert
T\right\vert _{\left( 1\right) }>g(\left\vert T\right\vert _{\left( 2\right)
}).$ We impose some weak regularity conditions. In particular we assume that 
$g$ is\ a c\`{a}dl\`{a}g function from $%
\mathbb{R}
_{0}^{+}$ (including 0) to $%
\mathbb{R}
_{0}^{+}.$ This will allow $g$ to have jumps, but limits the number of jumps
to countably many. We also insist that $g$ is weakly increasing. This
assures that a rejection (acceptance) for a realization $(\left\vert
t\right\vert _{\left( 1\right) },\left\vert t\right\vert _{\left( 2\right)
}) $ is not reversed when either $\left\vert t\right\vert _{\left( 1\right)
} $ or $\left\vert t\right\vert _{\left( 2\right) }$ is increased
(decreased). This reasoning also motivates a CR that is\ (topologically)
simply connected. Denote the set of weakly increasing c\`{a}dl\`{a}g
functions by $\mathbb{D}\left( 
\mathbb{R}
_{0}^{+},%
\mathbb{R}
_{0}^{+}\right) $, as in the\ common Skorokhod space notation, but here with
weak monotonicity.

\begin{definition}
A function $g\in \mathbb{D}\left( 
\mathbb{R}
_{0}^{+},%
\mathbb{R}
_{0}^{+}\right) $ is called the \emph{boundary function} (of the critical
region) when it defines:%
\begin{align*}
\text{Critical Region }& :CR_{g}=\left\{ \left( T_{1},T_{2}\right) \in 
\mathbb{R}
^{2}\mid \left\vert T\right\vert _{\left( 1\right) }>g(\left\vert
T\right\vert _{\left( 2\right) })\right\} , \\
\text{Acceptance Region }& :AR_{g}=\left\{ \left( T_{1},T_{2}\right) \in 
\mathbb{R}
^{2}\mid \left\vert T\right\vert _{\left( 1\right) }\leq g(\left\vert
T\right\vert _{\left( 2\right) })\right\} .
\end{align*}
\end{definition}

Justification for only using $t$-statistics is by sufficiency and
invariance. First, the MLE \ \ $\hat{\lambda}=\left( \hat{\tau},\hat{\theta}%
_{2},\hat{\sigma}_{11},\hat{\theta}_{1},\hat{\sigma}_{22}\right) ^{\prime }$
\ is a complete minimal sufficient statistic. The model constitutes a full
exponential model since the dimensions of the minimal sufficient statistic
and the parameter space are equal; see \citet{vanGarderen1997}. Second, $%
T_{1}$ and $T_{2}$ have distributions under the null that are independent of
the nuisance parameters $\tau ,$ $\sigma _{11},$ and $\sigma _{22}.$ %
\citet{HillierVanGarderenVanGiersbergen2021} shows that, also in finite
samples, $T=\left( T_{1},T_{2}\right) ^{\prime }$ is a maximal invariant
under an appropriate group of transformations generalizing the scale
invariance of the $t$-statistics. Theorem \ref{Th:MaxInvAbsOrderStatK} shows
that as a consequence of the permutation and reflection invariance only $%
1/8^{th}$ of the two-dimensional sample space of $T$ needs to be considered
and we define the critical region in the first octant (east to north-east).
The other seven octants in $\mathbb{R}^{2}$ follow by symmetry. The test
defined by $CR_{g}$ is indeed invariant to permutations, reflections, and
scale transformations. The domain of $g\left( \cdot \right) $ can therefore
be restricted to the non-negative real line and bounded by the $45^{\circ }$
line: $g\left( x\right) \leq x.$

We can put the definition of a similar test in terms of the boundary
function $g\left( \cdot \right) $, noting that $H_{0}$ is itself the
boundary of $H_{0}$ and $H_{1}$:

\begin{definition}
$g\left( \cdot \right) $ is said to be a \emph{similar }boundary function if
the probability of the critical region $CR_{g}$ defined by $g$ is constant
under $H_{0}$: 
\begin{equation*}
P\left[ T\in CR_{g}\mid \forall \left( \theta _{1},\theta _{2}\right) \in 
\mathbb{R}^{2}\ with\ \theta _{1}\theta _{2}=0 \right] =constant.\ 
\end{equation*}
\end{definition}

The boundary functions defined by the Wald (Sobel) and LR are not similar.
Figures \ref{fig:CRWalds} and \ref{fig:CRLRs} show the boundary functions
that define the critical region in terms of $\left( T_{1},T_{2}\right) $ for
the Wald and LR test. We show the boundaries for two critical values: one
such that for large $\left\vert \theta _{1}\right\vert $ or $\left\vert
\theta _{2}\right\vert $ the NRP is $5\%$ asymptotically. This value is the
usual $3.84$ for the Wald test and $1.96$ for the LR\ test. The second,
smaller critical value is such that the NRP is $5\%$ when $\theta
_{1}=\theta _{2}=0.$ This value is $0.96$ for the Wald test and $1.217$ for
the LR test. The rejection probabilities are shown as a function of the
noncentrality parameter $\mu _{2}=\theta _{2}/\sigma _{\theta _{2}}$ for
given $\mu _{1}=0,$ such that $H_{0}$ holds. For the LR test with critical
value $1.96$ the NRP goes down to $0.0025=0.05^{2}$ when $\theta _{1}=\theta
_{2}=0.$ In the second case, with the smaller critical value $1.217$, the
NRP is $5\%$ by construction when $\theta _{1}=\theta _{2}=0,$ but this is
not a valid test since for other values the NRPs are much higher than the
nominal size of $5\%$. The same situation will occur when constructing point
optimal invariant tests. The Wald test is considerably worse than the LR
test with lower NRP over a wider range of $\mu _{1}$, see figures \ref%
{fig:NRPWald} and \ref{fig:NRPLR}. 
\begin{figure}[h]
\centering
\begin{tabular}{cc}
\begin{subfigure}{0.4\textwidth} \centering
\includegraphics[width=0.95\textwidth]{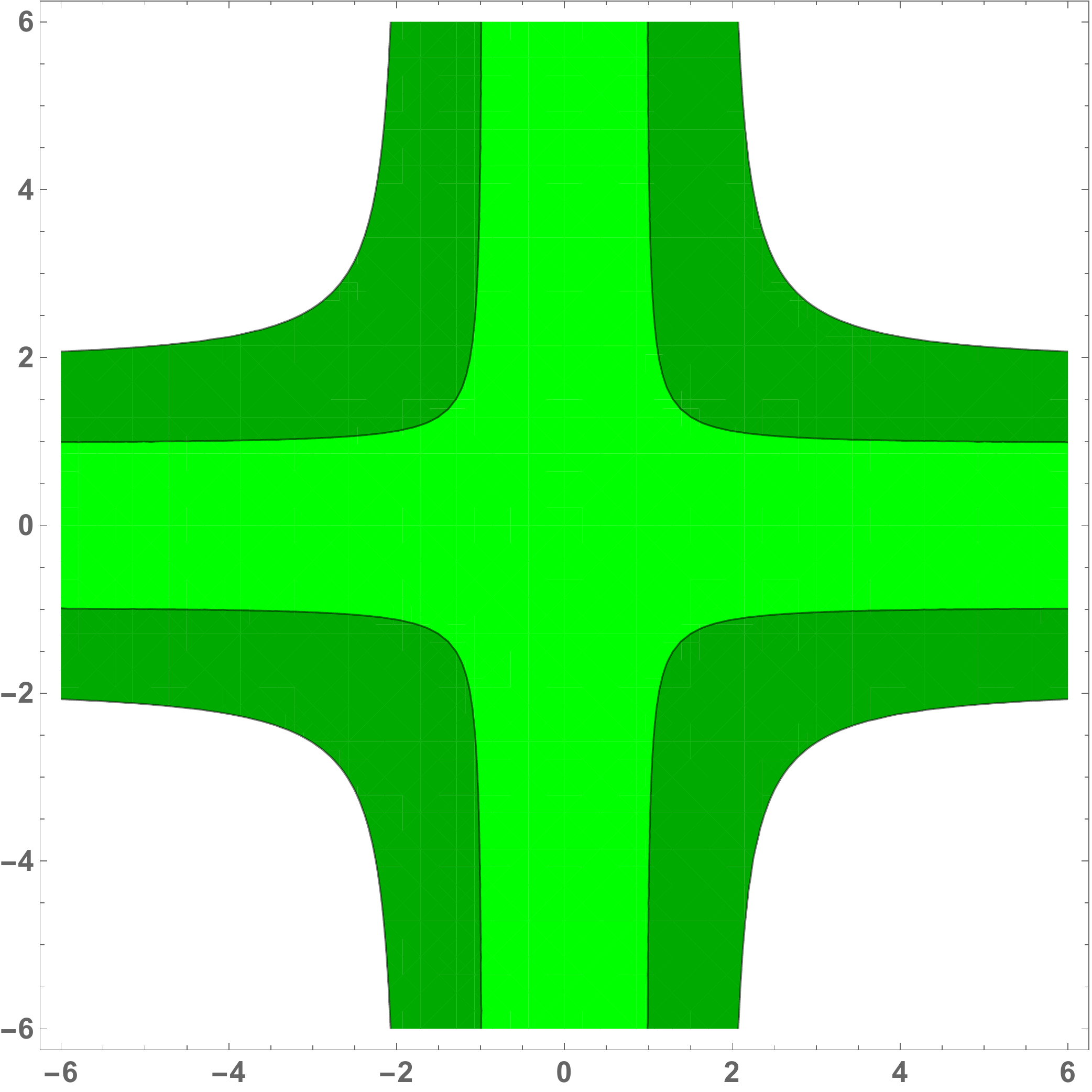} \caption{CR Wald
(Sobel) test. } \label{fig:CRWalds} \end{subfigure} & \begin{subfigure}{0.4%
\textwidth} \centering \includegraphics[width=0.95\textwidth]{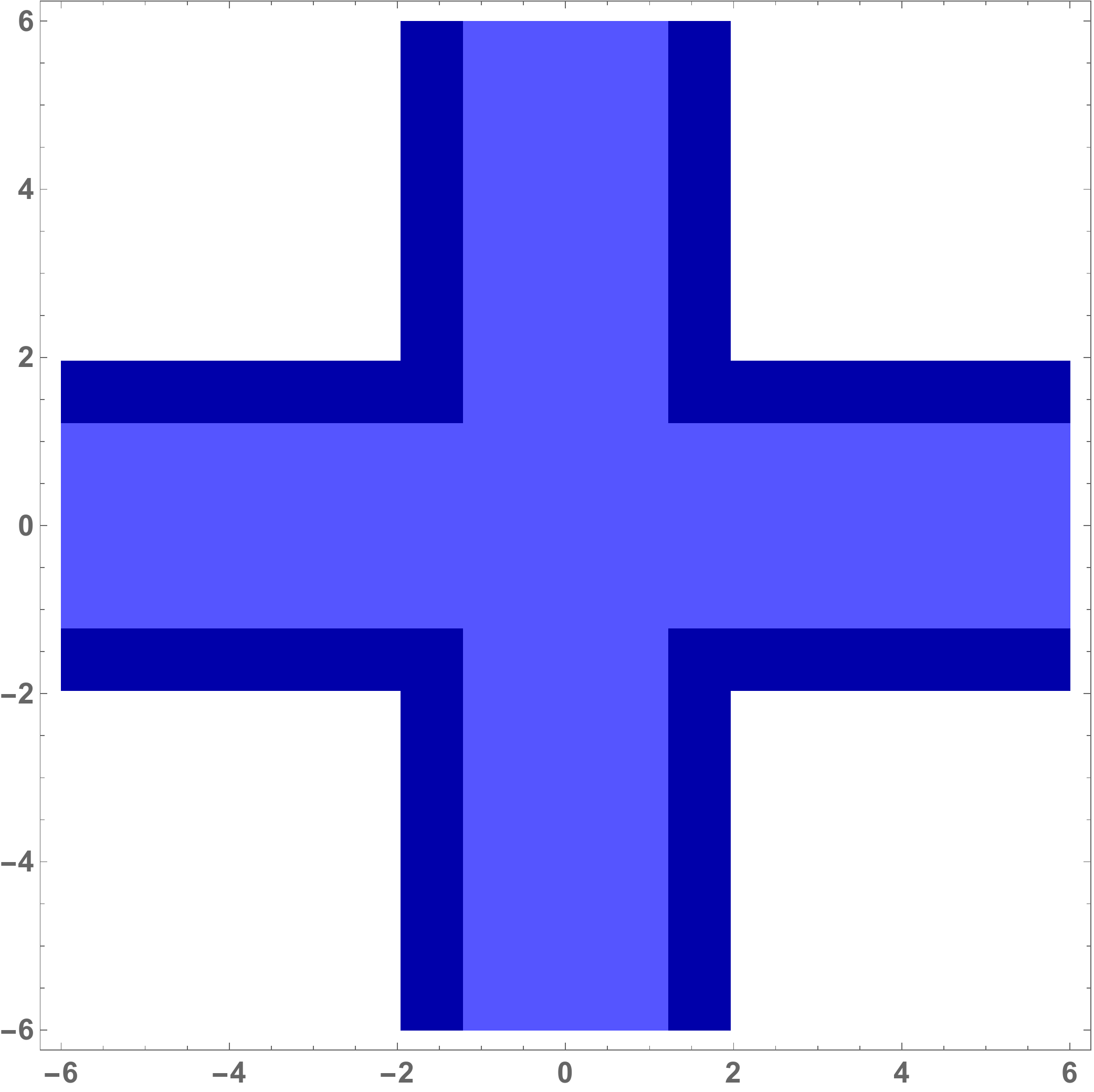}
\caption{CR LR test. } \label{fig:CRLRs} \end{subfigure} \\ 
\begin{subfigure}{.4\textwidth} \centering
\includegraphics[width=0.95\textwidth]{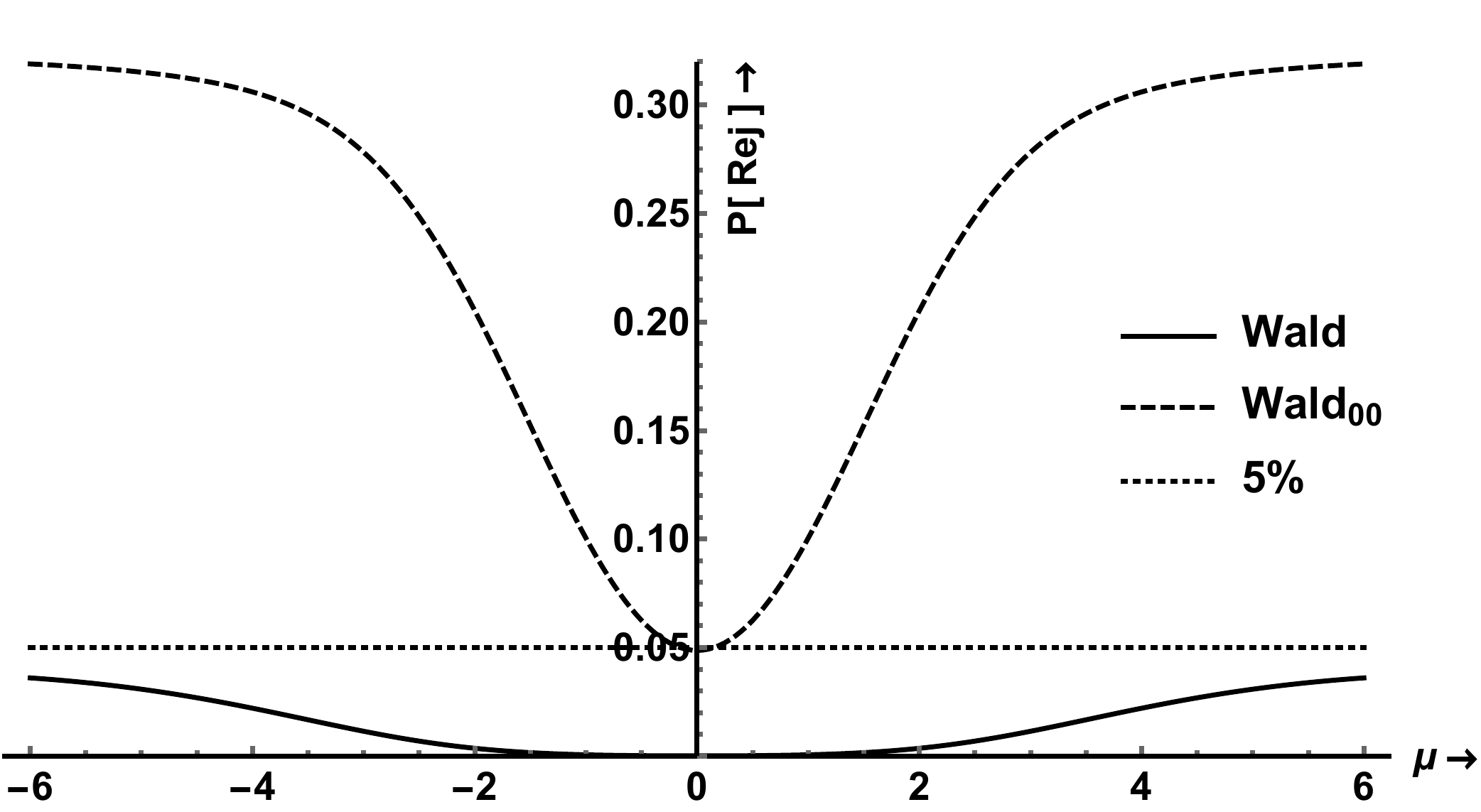} \caption{NRP Wald
(Sobel) test.} \label{fig:NRPWald} \end{subfigure} & \begin{subfigure}{.4%
\textwidth} \centering
\includegraphics[width=0.95\textwidth]{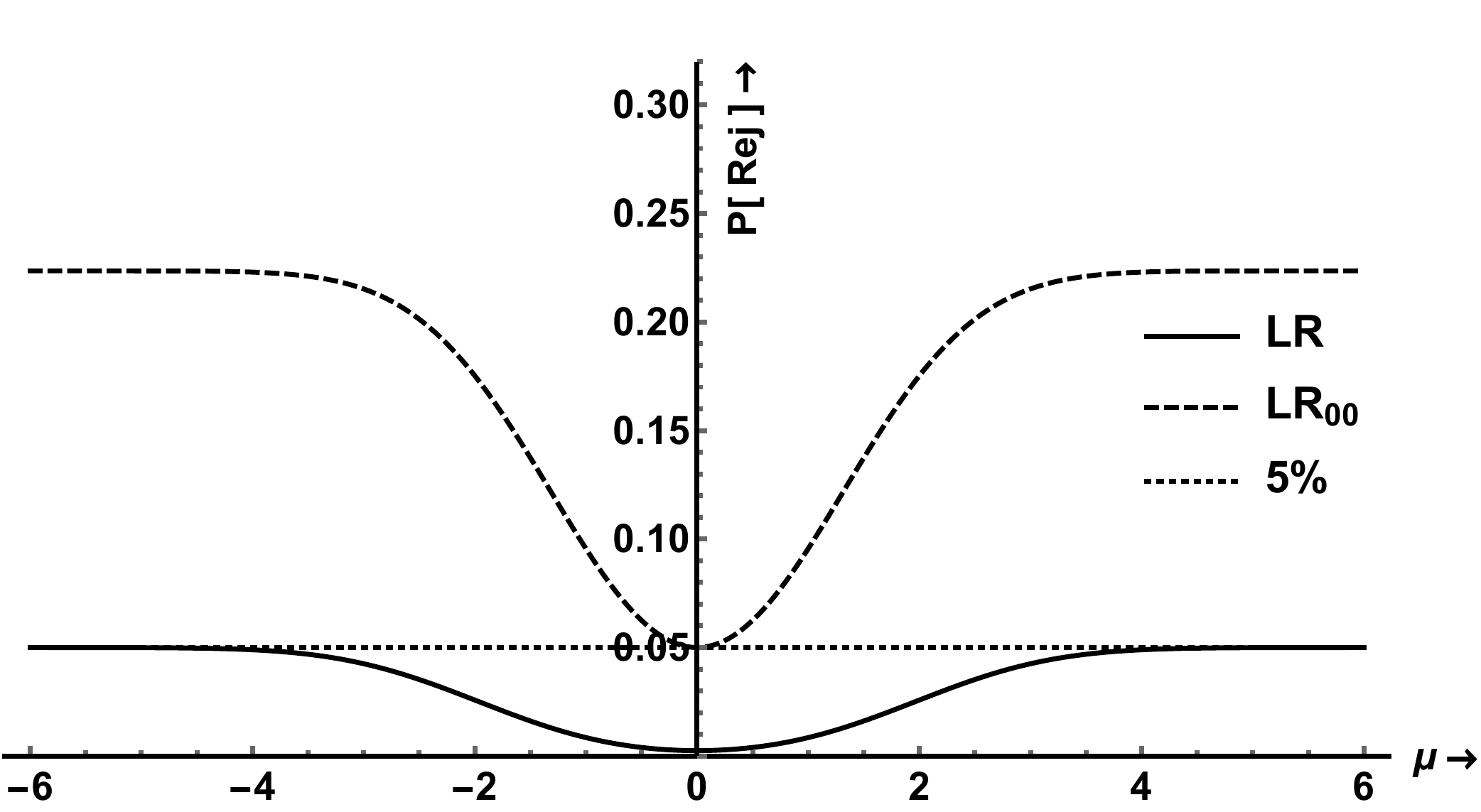} \caption{NRP LR
test.} \label{fig:NRPLR} \end{subfigure} \\ 
& 
\end{tabular}%
\vspace*{-6mm}
\caption[short]{Critical regions for Sobel (Wald) and LR tests and their
rejection probabilities. White areas are the critical regions for the valid $%
5\%$ tests. For $5\%$ rejection in the origin $(\protect\alpha ,\protect%
\beta )=(0,0)$, the dark areas are added to the critical regions which leads
to over-sized tests. The LR test is poor, but the Sobel test is even worse.}
\end{figure}

The trinity of classic tests is clearly nonsimilar. The question is if we
can do much better. Does there exist a similar test or is this a problem
that is intrinsically unsolvable? The next theorem, and the main theoretical
contribution, answers this question.

\begin{theorem}
\label{Th:exactsimilartest}A similar boundary function $g\left( \cdot
\right) $ exists for testing $H_{0}:\theta _{1}\theta _{2}=0$ if and only if 
$1/\alpha $ is an integer (or trivially $\alpha =0$). If it exists, the
boundary is unique in $\mathbb{D}\left( 
\mathbb{R}
_{0}^{+},%
\mathbb{R}
_{0}^{+}\right) .$
\end{theorem}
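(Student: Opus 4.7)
The plan is to convert similarity into a pointwise functional equation for $g$ and then solve that equation constructively.

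\textbf{Step 1: Reduce similarity to a pointwise identity.} Without loss of generality fix $\mu_1=0$ and let $\mu_2=\mu\ge 0$ sweep the null boundary. Since $T_1\perp T_2$ and $T_1\sim N(0,1)$, conditioning the rejection probability on $|T_2|=y$ and exploiting monotonicity of $g$ yields that the conditional rejection probability equals $h(y):=F(g^{-1}(y))-F(g(y))$, where $F(t)=2\Phi(t)-1$ is the half-normal CDF and $g^{-1}(y):=\inf\{u:g(u)>y\}$ is the generalized right-continuous inverse of $g$. Similarity then becomes $\int_0^\infty h(y)\,f_{|T_2|}(y;\mu)\,dy=\alpha$ for every $\mu\ge 0$. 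Plugging in $f_{|T_2|}(y;\mu)=\sqrt{2/\pi}\,e^{-(y^2+\mu^2)/2}\cosh(\mu y)$, expanding $\cosh$ in its power series, and matching coefficients of $\mu^{2k}$ against the Taylor series of $\alpha e^{\mu^2/2}$ produces the moment conditions $\int_0^\infty(h(y)-\alpha)\,y^{2k}e^{-y^2/2}\,dy=0$ for all $k\ge 0$. The substitution $u=y^2$ converts these into a Stieltjes moment problem whose weight is Carleman-determinate, so $h(y)=\alpha$ for Lebesgue a.e.\ $y$; the càdlàg monotone structure of $g$ upgrades this to the pointwise identity $F(g^{-1}(y))-F(g(y))=\alpha$ for every $y\ge 0$.

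\textbf{Step 2: Solve the identity; derive uniqueness.} Set $y_k:=F^{-1}(k\alpha)$. Since $g(0)=0$, the identity at $y=0$ forces $g^{-1}(0)=y_1$, so $g\equiv 0$ on $[0,y_1)$ with a jump at $y_1$. The decisive lemma is that this jump must have height exactly $y_1$: if instead $g(y_1)=c_1\in(0,y_1)$, then for $y\in(c_1,y_1)$ we have $g(y)=0$, while right-continuity of $g$ at $y_1$ forces $g(u)<y$ on a right-neighborhood of $y_1$, hence $g^{-1}(y)>y_1$ strictly, contradicting the identity which demands $g^{-1}(y)=F^{-1}(\alpha)=y_1$. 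Iterating this jump-height lemma, the identity together with $g\le\mathrm{id}$, càdlàg, and monotonicity force $g(y)=y_k$ on $[y_k,y_{k+1})$ for every $k$ with $(k+1)\alpha\le 1$, and this step function is the unique candidate in $\mathbb{D}(\mathbb{R}_0^+,\mathbb{R}_0^+)$.

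\textbf{Step 3: Existence criterion.} Let $N:=\lfloor 1/\alpha\rfloor$. Step 2 pins down $g$ uniquely on $[0,y_N)$, and any admissible extension to $[y_N,\infty)$ must be a final plateau $g\equiv y_N$. The pointwise identity on that last plateau reads $F(\infty)-F(y_N)=1-N\alpha$, which equals $\alpha$ if and only if $(N+1)\alpha=1$, i.e., $1/\alpha\in\mathbb{N}$; the degenerate case $\alpha=0$ corresponds to an empty critical region. The main obstacle is the jump-height lemma of Step 2: the pointwise identity at first sight leaves the value $g(y_k)$ free in the interval $[y_{k-1},y_k]$, and collapsing this apparent continuum to the single value $y_k$ requires the careful right-continuity argument above. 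Steps 1 and 3 are essentially routine once that lemma is secured.
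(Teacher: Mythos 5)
Your proof is correct and follows essentially the same route as the paper's: completeness of the (folded) normal family reduces similarity to the pointwise identity $\Phi\left(g^{-1}(t)\right)-\Phi\left(g(t)\right)=\alpha/2$, which is then solved iteratively (your jump-height lemma is the paper's steps 5--6) to yield the unique step function with jumps at $\Phi^{-1}\left(\tfrac{1}{2}+k\tfrac{\alpha}{2}\right)=F^{-1}(k\alpha)$, and the terminal-plateau condition forces $1/\alpha\in\mathbb{N}$. The only substantive differences are that you prove the completeness step explicitly via a Stieltjes-moment/Carleman argument (and handle the a.e.-to-everywhere upgrade, which the paper glosses over) where the paper simply invokes completeness of the normal location family, and that your Step 3 contains a harmless indexing slip: with $N=\lfloor 1/\alpha\rfloor$ the equation $(N+1)\alpha=1$ is literally unsatisfiable, and the correct reading is that for $1/\alpha\notin\mathbb{N}$ the final plateau gives $1-N\alpha<\alpha$ (non-existence), while for $1/\alpha=M\in\mathbb{N}$ your Step 2 construction already terminates with last plateau $[y_{M-1},\infty)$ and $F(\infty)-F(y_{M-1})=\alpha$ (existence).
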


For common significance levels, including $1\%$, $5\%$, and $10\%$, the
theorem proves that exact similar tests exist. The proof is given in
Appendix \ref{sec:AppendixProofNonExistenceTheorem} and exploits the
symmetries of the problem and the completeness of the normal distribution.
The proof is constructive showing that the $g$ function must be a step
function and is unique within the class of weakly increasing c\`{a}dl\`{a}g
functions. For $\alpha =0.25$ the exact similar boundary is shown in Figure %
\ref{Fig:ExactSimBoundalfa25}.

\begin{figure}[h]
\par
\begin{center}
\includegraphics[width=0.45\textwidth]{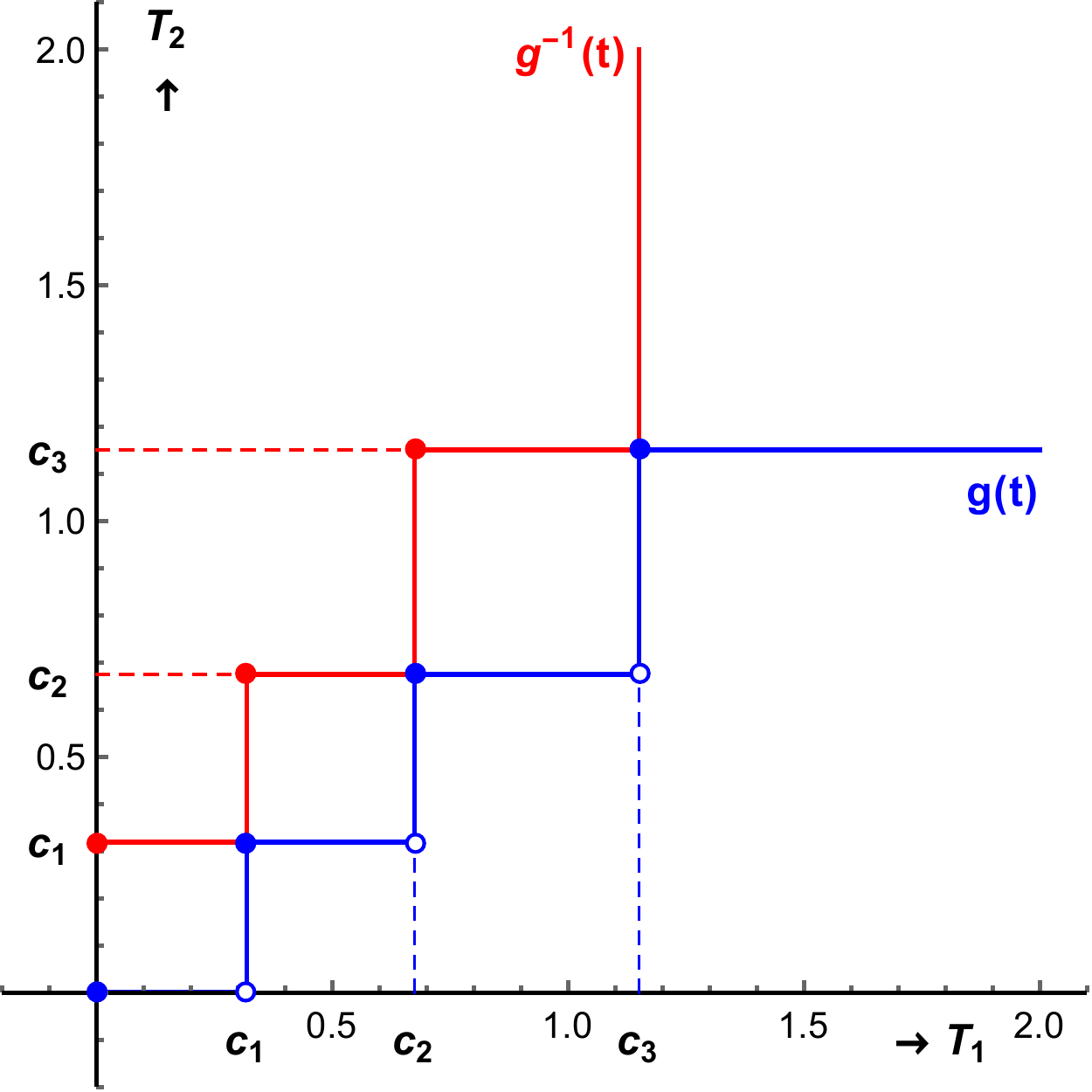}
\end{center}
\par
\vspace*{-6mm}
\caption[shot]{Boundary of the exact similar $g$-test when $\protect\alpha %
=0.25$ in the first quadrant. }
\label{Fig:ExactSimBoundalfa25}
\end{figure}

The CR shows a clear parallel with Figure 2 of \citet{berger1989}. The step
function starts horizontal and defines a CR with objectionable features. In
particular, the CR will include all $T$ such that $0<\left\vert T\right\vert
_{\left( 2\right) }<\Phi ^{-1}\left( \frac{1}{2}+\frac{\alpha }{2}\right) $.
For $\alpha =0.05$ this corresponds to rejection when both $t$-statistics
are smaller than $0.0627$ in absolute value (but non-zero) \ (for $\alpha
=0.10$ and $0.01$ smaller than $0.1257$ and $0.0125$ respectively). Such $t$%
-statistics close to zero cannot be characterized as strong evidence against
the null, in favor of both $\theta _{1}$ and $\theta _{2}$ being non-zero.
Nevertheless, the test is size correct and renders the LR test $\alpha $%
-inadmissible. So it is an \textquotedblleft \emph{Emperor's New Tests}%
\textquotedblright\ in the terminology of \citet{Perlman1999} and does not
provide a satisfactory solution to the problem of finding a better test. A
second unattractive feature of this exact similar boundary is that for
increasing values of the test statistics parallel and close to the diagonal,
the decision alternates between rejection and acceptance, despite the fact
that evidence against the null appears monotonically increasing.

Given the uniqueness of the similar test, we relax the strict similarity
requirement and consider a class of near similar tests with NRPs that differ
from $\alpha $ by no more than $\epsilon $. Within this class of so called $%
\epsilon $-similar tests given in Definition \ref{Def:NearSimilarG} below,
we determine a test that avoids the objectionable properties of the exact
similar test, but achieves good power properties. Within this class of near
similar tests we determine the power envelope and determine the test that
minimizes the distance between its power surface and this power envelope.
This new test is therefore optimal in this sense. It is easy to implement
using Table \ref{table:g_values_intro} or using the R-code in Appendix \ref%
{sec:AppendixRCode}.

The first step in the composition of this optimal test is a new general
method for constructing near similar tests.

\section{Near Similar Test Construction: Varying \textit{g}-Method}

A general method for constructing near similar tests involves three generic
steps:

\begin{enumerate}
\item Define a flexible boundary $g$ for the critical region in the relevant
sample space.

\item Define a criterion function $Q\left( g\right) $ that penalizes the
deviation of the NRP from the level $\alpha $ for a grid of parameter values
under the null (and possibly restrictions on $g$ and other aspects deemed
relevant).

\item Systematically vary and determine $g$ such that it minimizes the
criterion function and is therefore as close to similarity as possible in
the metric defined by $Q$.
\end{enumerate}

The relevant sample space is determined by the particular testing problem at
hand and may have been reduced by sufficiency, invariance, or other
principles, to dimension $k$, say. The boundary $g$ of the critical and
acceptance region is then of dimension $\left( k-1\right) $, but may consist
of disjoint parts if the critical and/or acceptance region are not simply
connected in a topological sense. There are various possibilities to define $%
g$ flexibly, but we will use splines.

The criterion function may include aspects other than similarity, for
instance smoothness and monotonicity of $g$, convexity of the critical or
acceptance regions, or even rejection probabilities under alternatives.
Consequently, Step 3 will generally be a constraint optimization problem.
The systematic variation of $g$ is intended to be in line with the
optimization routine used to minimize $Q$ as in, e.g. a Newton-Raphson-type
procedure.

For mediation testing with $\alpha =0.05$ an exact test exists, but the
varying $g$-method may not find it because (i) $g$ is not flexible enough
(ii) $Q\left( g\right) $ includes criteria other than NRPs (iii) numerical
difficulties determining the jumps (iv) and finally restrictions to
purposely exclude objectionable CRs.

An explicit implementation of the varying $g$-method to the mediation
problem is given next.

\subsection{Near Similar Mediation \textit{g}-Test}

Step 1 in the varying $g$-method is to determine the relevant sample space
for the testing problem. The first dimensional reduction is to the MLE which
is minimal sufficient and complete. The second reduction to $\left(
T_{1},T_{2}\right) $ follows from location-scale type invariance shown in %
\citet{HillierVanGarderenVanGiersbergen2021}. Permutation and reflection
symmetries further reduce the relevant sample space to one octant according
to Theorem \ref{Th:MaxInvAbsOrderStatK}. The maximal invariant is the
ordered absolute $t$-statistic $(\left\vert T\right\vert _{\left( 1\right)
},\left\vert T\right\vert _{\left( 2\right) })=\left( \min \left( \left\vert
T_{1}\right\vert ,\left\vert T_{2}\right\vert \right) ,\max \left(
\left\vert T_{1}\right\vert ,\left\vert T_{2}\right\vert \right) \right) $
with density given in Lemma \ \ref{Lem:pdf_2_order_t_statistics} . For
general $\mu $, the rejection probability (RP) for the $g$-method is given
by:%
\begin{equation}
\pi _{g}\left( \mu _{1},\mu _{2}\right) =1-\int_{0}^{\infty
}\int_{0}^{g(t_{2})}f_{|T|_{(1)},|T|_{(2)}}(t_{1},t_{2},\mu _{1},\mu
_{2})dt_{1}dt_{2}.  \label{RP}
\end{equation}%
This two-dimensional integral can be reduced to a one-dimensional integral
by expressing the inner integral in terms of the CDF of the standard normal
distribution $\Phi \left( \cdot \right) $:%
\begin{eqnarray}
\int_{0}^{b}f_{|T|_{(1)},|T|_{(2)}}(t_{1},t_{2},\mu _{1},\mu _{2})dt_{1}\,
&=&\sqrt{\frac{2}{\pi }}\exp (-t_{2}^{2}/2)\cdot  \notag \\
&&\{\exp (-\mu _{2}^{2}/2)\cosh (t_{2}\mu _{2})\left( \Phi (b-\mu _{1})+\Phi
(b+\mu _{1})-1\right) +  \notag \\
&&\exp (-\mu _{1}^{2}/2)\cosh (t_{2}\mu _{1})\left( \Phi (b-\mu _{2})+\Phi
(b+\mu _{2})-1\right) \}.  \label{F1f2}
\end{eqnarray}%
The RP thus simplifies to a one-dimensional integral by integrating formula (%
\ref{F1f2}) for $b=g(t_{2})$ over $t_{2}\in \lbrack 0,\infty )$, which
greatly improves efficiency and accuracy.

All NRPs in the paper were determined by numerical integration over $%
t_{2}\in \lbrack 0,\mu _{2}+9]$ \ under the null with $\mu _{1}=0$. The
numerical integration was performed in \emph{Julia}, see %
\citet{Bezanson17julia}, using the function \emph{quadgk} that is based on
adaptive Gauss-Kronrod quadrature. For the final $g$-function, the estimated
upper bound on the absolute error for the calculated NRP is approximately $%
10^{-8}$.

The $g$-boundary is generally determined by an algorithm and Appendix \ref%
{sec:AppendiixAlgorithms} shows the basic implementation of the varying $g$%
-method using linear splines with $J+2$ knots, with the first and last knots
fixed. In spite of its simplicity, it leads to big improvements even for
small values of $J$. Figure \ref{fig:CRgconstructJ6} illustrates the
construction of the $g$-function for a fixed number of grid points $J=6$ and
the resulting $CR_{g}$ in the sample space of $( \left\vert T\right\vert
_{\left( 1\right) },\left\vert T\right\vert _{\left( 2\right) }) $, the East
to North-East octant of $%
\mathbb{R}
^{2}$.

\begin{figure}[h]
\par
\begin{center}
\includegraphics[width=0.5\textwidth]{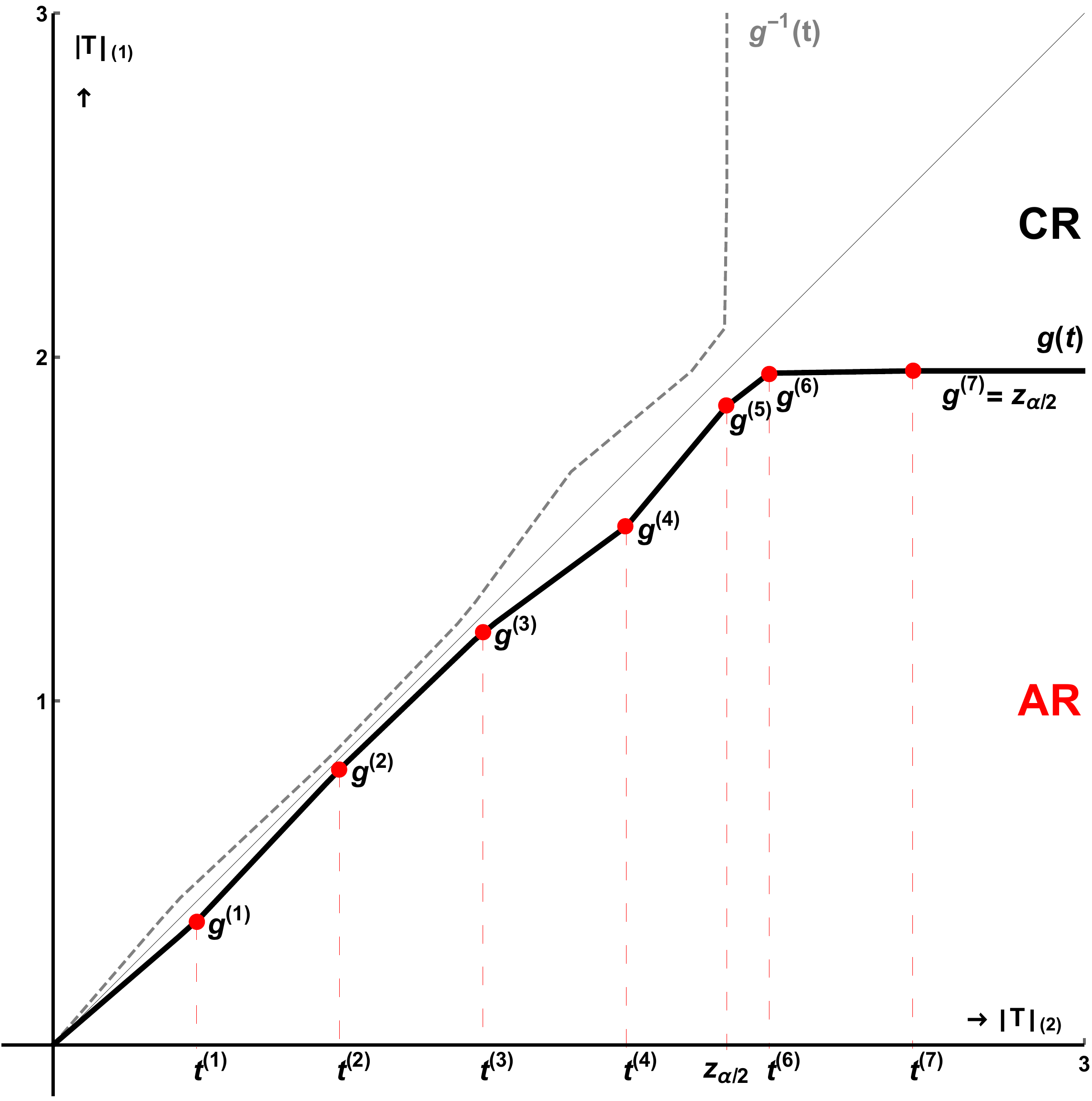}
\end{center}
\par
\vspace*{-6mm}
\caption{Construction of the basic $g$-function with $J=6$, so $8$ knots in
all and the resulting CR boundary in $( \left\vert T\right\vert _{\left(
2\right) },\left\vert T\right\vert _{\left( 1\right) }) $ space.}
\label{fig:CRgconstructJ6}
\end{figure}

Figure \ref{fig:NRPcompareJ369LRW} shows the NRPs of the test in comparison
to the LR\ and Wald (Sobel) tests. There is a remarkable gain in the lowest
NRP, and therefore local power, from $0.25\%$ to $4.999\%$. Already for $J=2$
there is a large improvement and after $J=8$ gains are small, and beyond $%
J=16$ there was hardly any improvement. 
\begin{figure}[h]
\par
\begin{center}
\includegraphics[width=0.6\textwidth]{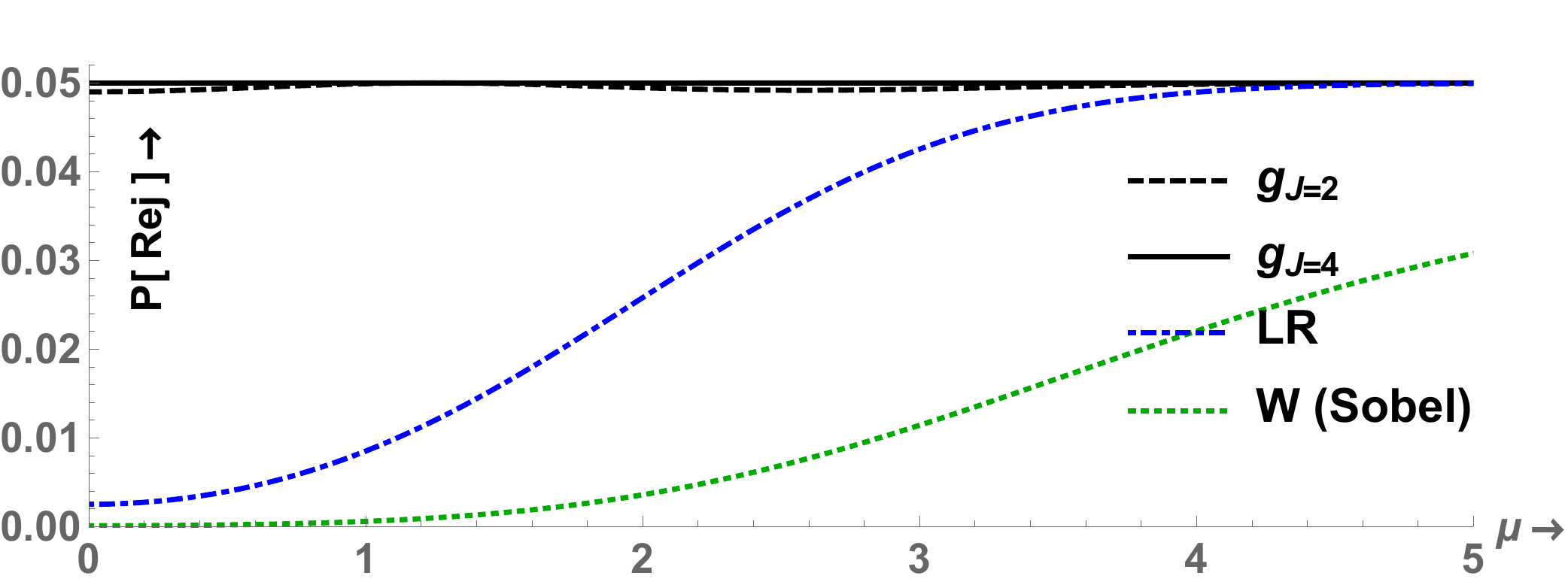}
\end{center}
\par
\vspace*{-6mm}
\caption{NRPs $g$-tests with $J=2,4$ versus LR and Wald (Sobel) tests.}
\label{fig:NRPcompareJ369LRW}
\end{figure}

\subsection{Power}

Insistence on similarity can have negative consequences for the power in
general, but not here. The power envelope with or without (near) similarity
restriction are very close. The power surface of our optimal test is also
close. Even the basic test with $J=16$ has very good power in comparison
with the Sobel and LR tests and is uniformly better for all values of the
noncentrality parameter $\mu $. In a neighborhood of the origin with $\mu
=0, $ it is essentially $5\%$ points higher.

The RP $\pi _{g}\left( \mu _{1},\mu _{2}\right) $ defined in Equation (\ref%
{RP}) is the NRP if $\mu _{1}$ and/or $\mu _{2}$ equal $0$ (the null
hypothesis). When both are non-zero,\ $H_{0}$ is false and $\pi _{g}\left(
\mu _{1},\mu _{2}\right) $ is the power of the test defined by $CR_{g}$.
Figure \ref{fig:PowerComparisonLJ369LRW} illustrates the power in the $%
45^{\circ },$ $\mu _{1}=\mu _{2},$ direction. Power in other directions is
also superior to the Wald (Sobel) and LR tests.

\begin{figure}[h]
\par
\begin{center}
\includegraphics[width=0.60\textwidth]{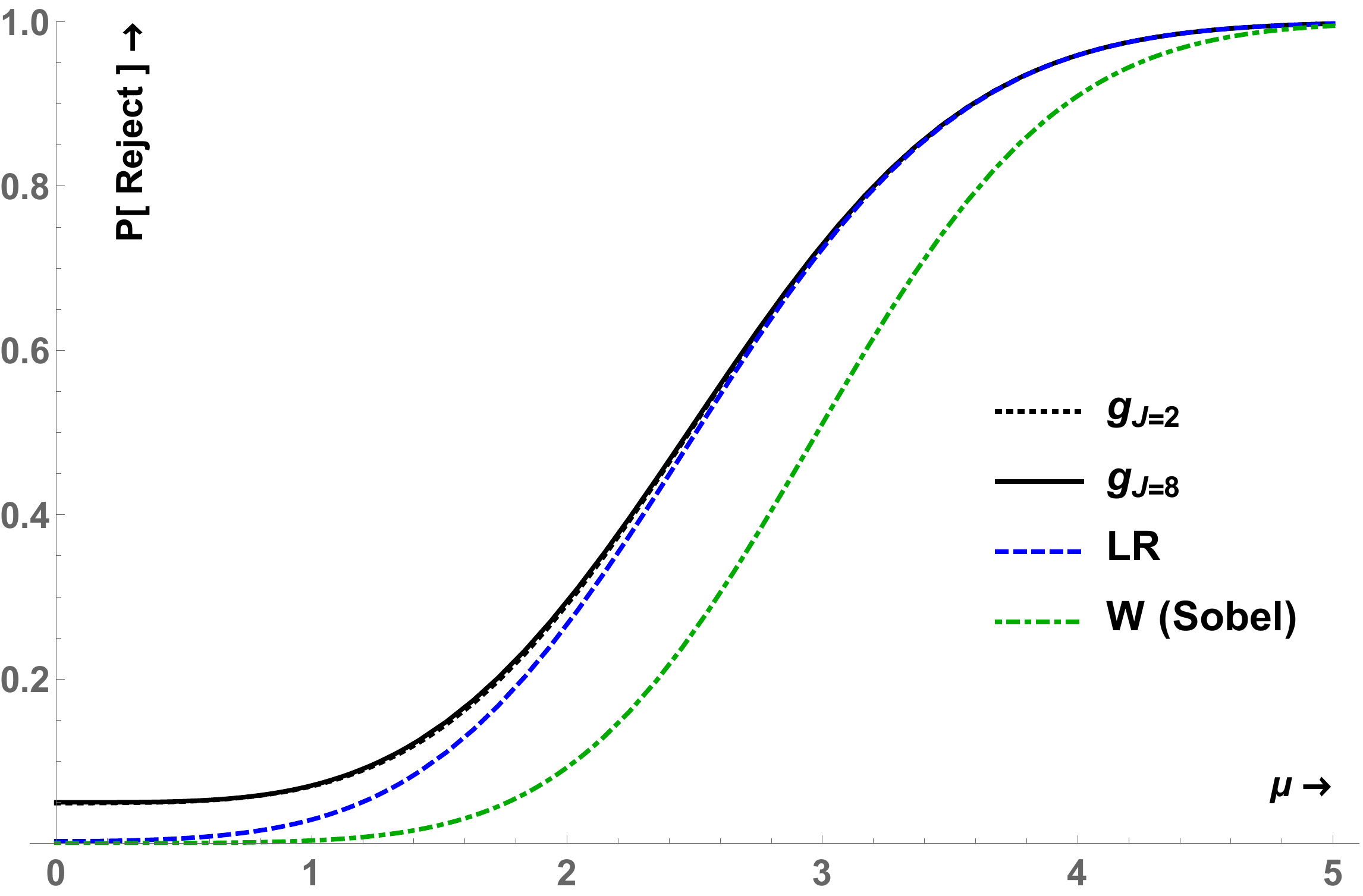}
\end{center}
\par
\vspace*{-6mm}
\caption{Power comparison basic $g$-tests, LR- and W (Sobel) tests along $%
\protect\mu _{1}=\protect\mu _{2} \ \ (=\protect\mu )$.}
\label{fig:PowerComparisonLJ369LRW}
\end{figure}
\medskip


There is a straightforward explanation for the additional power. The Wald
and LR\ test both reject far less than $5\%$ near the origin. The critical
region can therefore be extended and the power increased without failing the
size condition. In the origin the NRPs are close to $0\%$ for the LR\ and
Wald (Sobel) tests. By extending the critical region we can therefore gain
almost $5\%$ power without violating the size condition. The LR test has
some attractive features including that rejection for a particular $\left(
t_{1},t_{2}\right) $ implies rejection for larger values of $t_{1}$ and/or $%
t_{2}$ also. This is intuitive since the evidence against the null is
increasing. A disadvantage is however that one never rejects when either $%
t_{1}$ or $t_{2}$ is smaller than $1.96$ and this causes extreme
conservativeness that can be resolved by adding area to the critical region.
It should also be noted that the relevant distribution of the order
statistic $(\left\vert T\right\vert _{\left( 1\right) },\left\vert
T\right\vert _{\left( 2\right) })$ is quite different from the distribution
of the product of two standard normals for $\left( T_{1},T_{2}\right) $ or
their absolute values.

\subsection{Power Envelope}

Comparison to the Sobel and LR tests is limited because they both perform
very poorly for small values of $\mu $. The absolute quality, or even near
optimality, of the new $g$-test can only be assessed by comparing the power
surface of the test to the power envelope, or a tight upper bound thereof,
for a class of tests that satisfy appropriate invariance-, size and almost
similarity restrictions. Since the exact similar invariant test is
objectionable when it exists, we introduce a class $\Gamma _{\alpha
,\epsilon }$ of near similar tests with NRPs that deviate less than $%
\epsilon $ from the $\alpha $ level and an operational (super)class $\Gamma
_{\alpha ,\epsilon }^{\mathbb{M}_{0}}\supseteq \Gamma _{\alpha ,\epsilon }$
as follows.

\begin{definition}
\label{Def:NearSimilarG}The class $\Gamma _{\alpha ,\epsilon }$ of near
similar boundary functions with $\epsilon >0$ and given $\alpha $ is defined
by: 
\begin{equation*}
\Gamma _{\alpha ,\epsilon }=\left\{ g\in \mathbb{D}\left( 
\mathbb{R}
_{0}^{+},%
\mathbb{R}
_{0}^{+}\right) \left\vert \sup_{\mu _{0}\geq 0}P\left[ CR_{g}\mid \left(
0,\mu _{0}\right) \right] \leq \alpha \text{ and }\inf_{\mu _{0}\geq 0}P%
\left[ CR_{g}\mid \left( 0,\mu _{0}\right) \right] \geq \alpha -\epsilon
\right. \right\}
\end{equation*}%
The class $\Gamma _{\alpha ,\epsilon }^{\mathbb{M}_{0}}$ with set $\mathbb{M}%
_{0}=\left\{ \left( 0,\mu _{0}^{\left( \iota \right) }\right) \right\}
_{\iota =1}^{\Upsilon _{0}}$ containing $\Upsilon _{0}$ points under $H_{0}$%
, is defined by: 
\begin{equation*}
\Gamma _{\alpha ,\epsilon }^{\mathbb{M}_{0}}=\left\{ g\in \mathbb{D}\left( 
\mathbb{R}
_{0}^{+},%
\mathbb{R}
_{0}^{+}\right) \left\vert \sup_{\left( 0,\mu _{0}\right) \in \mathbb{M}%
_{0}}P\left[ CR_{g}\mid \left( 0,\mu _{0}\right) \right] \leq \alpha \text{
and }\inf_{\left( 0,\mu _{0}\right) \in \mathbb{M}_{0}}P\left[ CR_{g}\mid
\left( 0,\mu _{0}\right) \right] \geq \alpha -\epsilon \right. \right\} .
\end{equation*}
\end{definition}

For $\epsilon =0$ the boundary functions in $\Gamma _{0}$ would be similar.
We consider $\alpha =0.05$ in all our numerical examples, but for general $%
1/\alpha \notin 
\mathbb{N}
$ the set would be empty according to Theorem \ref{Th:exactsimilartest}. For 
$\epsilon =\alpha $, on the other hand, $\Gamma _{\alpha ,\alpha }$ contains 
\emph{all} $g$-based tests that satisfy the size condition. Our interest is
in $\epsilon $ close to $0$, when $\Gamma _{\alpha ,\epsilon }$ contains
boundaries that are almost similar. The minimum value of $\epsilon $ for
which $\Gamma _{\alpha ,\epsilon }$ is not empty is 0 for the mediation
problem when $1/\alpha \in 
\mathbb{N}
$, but in general depends on the testing problem considered and larger than $%
0$ if no similar test exists.

The class $\Gamma _{\alpha ,\epsilon }^{\mathbb{M}_{0}}$ can be thought of
as a discretization of $\Gamma _{\alpha ,\epsilon }$ in the sense that a
grid of points under the null is considered. It imposes less restrictions
and enforces near similarity conditions on a finite number of points only.
As a consequence it may contain boundaries that do not satisfy the size
condition for points that are not in $\mathbb{M}_{0}.$ Obviously $\Gamma
_{\alpha ,\epsilon }$\ $\subseteq \Gamma _{\alpha ,\epsilon }^{\mathbb{M}%
_{0}}$ since the size and NRP conditions also hold for the points in $%
\mathbb{M}_{0}$.

Within the class $\Gamma _{\alpha ,\epsilon }$ there is no unique solution.
As a consequence one has to choose a boundary function from $\Gamma _{\alpha
,\epsilon },$ or in practice from $\Gamma _{\alpha ,\epsilon }^{\mathbb{M}%
_{0}},$ to obtain an operational test. For the construction of the power
envelope we can select the test in $\Gamma _{\alpha ,\epsilon }^{\mathbb{M}%
_{0}}$ that maximizes the power against a particular point $\left( \mu
_{1},\mu _{2}\right) $ in the alternative. This test is a Point Optimal
Invariant Near Similar (POINS) test. The critical region of this test varies
with $\left( \mu _{1},\mu _{2}\right) $ and no uniformly most powerful test
exists within the class $\Gamma _{\alpha ,\epsilon }$. It can be used
however, to construct an upper bound for the power envelope.

\begin{definition}
The power envelope of a near similar invariant test with $\epsilon >0$ is
defined as: 
\begin{equation*}
\pi \left( \mu _{1},\mu _{2}\right) =\max_{g\in \Gamma _{\alpha ,\epsilon
}}P \left[ CR_{g}\mid \left( \mu _{1},\mu _{2}\right) \right] .
\end{equation*}%
For a given set of points $\mathbb{M}_{0}=\left\{ (0,\mu _{0}^{\left( \iota
\right) })\right\} _{\iota =1}^{\Upsilon _{0}}$ an upper bound to the power
envelope is:%
\begin{equation*}
\bar{\pi}\left( \mu _{1},\mu _{2}\right) =\max_{g\in \Gamma _{\alpha
,\epsilon }^{\mathbb{M}_{0}}}P\left[ CR_{g}\mid \left( \mu _{1},\mu
_{2}\right) \right] .
\end{equation*}
\end{definition}

For notational simplicity we have suppressed the dependence on $\epsilon $
and $\mathbb{M}_{0}.$ Since $\Gamma _{\alpha ,\epsilon }$\ $\subseteq \Gamma
_{\alpha ,\epsilon }^{\mathbb{M}_{0}}$ and elements of $\Gamma _{\alpha
,\epsilon }^{\mathbb{M}_{0}}$ do not necessarily satisfy the size condition
for all parameter values it follows that $\bar{\pi}\left( \mu _{1},\mu
_{2}\right) \geq \pi \left( \mu _{1},\mu _{2}\right) ,$ because fewer
conditions are imposed.\ Choosing a finer grid for $\mathbb{M}_{0}$ will
force $\bar{\pi}\left( \mu _{1},\mu _{2}\right) \ $closer to $\pi \left( \mu
_{1},\mu _{2}\right) $, at least in the additional points in $\mathbb{M}_{0}$
where the size condition is now required to hold. Also note that the
\textquotedblleft point\textquotedblright\ optimal $g$ that maximizes power
for the point $\left( \mu _{1},\mu _{2}\right) ,$ may have undesirable
features such as including parts of the axes in the critical region, even
though such observations are perfectly in line with the null hypothesis.

We determine $\bar{\pi}\left( \mu _{1},\mu _{2}\right) $ numerically for $%
\alpha =0.05$ by maximizing the power directly by selecting critical region
points in the sample space that maximize the probability of rejection when
the true density has parameter $\left( \mu _{1},\mu _{2}\right) $, under the
side conditions that the NRP$\in \left[ 0.05-\epsilon ,0.05\right] $ for all
parameters $\left( 0,\mu _{0}\right) \in \mathbb{M}_{0}$. The sample space
is decomposed into $285,150$ squares\ and a modern optimization routine is
used to determine which squares should be included in the critical or
acceptance region in order to maximize the power while at the same time
satisfying the approximate similarity condition. This is repeated for a grid
of $\left( \mu _{1},\mu _{2}\right) $ points. So for each point on the grid
the POINS\ critical region is determined and the power recorded. Appendix %
\ref{sec:AppendiixAlgorithms} gives details of the algorithm and the
optimization routine that can deal with a large number of variables and side
conditions.

By dropping the near similarity restriction ($0.05-\epsilon \leq NRP)$ in
the same algorithm, we can construct a power envelope for nonsimilar tests.
The maximal difference from the (higher) nonsimilar power surface is $2\%$
points when power is around $40\%,$ showing that the power loss due to the
similarity requirement is small. The calculated $\bar{\pi}\left( \mu
_{1},\mu _{2}\right) $ surface enables us to construct a correctly sized
optimal test derived in the next section.

\section{The New Mediation Test}

\label{sec:OptimalgTest}

Having determined an upper bound to the power envelope, we can determine a $%
g $-boundary function with a power surface as close as possible to this
upper bound. This optimal test is found using the algorithm given in
Appendix \ref{sec:AppendiixAlgorithms}. This function is given in Appendix %
\ref{sec:AppendixRCode}~and R-code is also provided there. For ease of
implementation we give values of $g\left( t\right) $ in Table \ref%
{table:g_values_intro}. Figure \ref{fig:OptimalVG} shows the optimal $g$%
-boundary test for the mediation problem. 
\begin{figure}[h]
\centering
\includegraphics[width=0.6\textwidth]{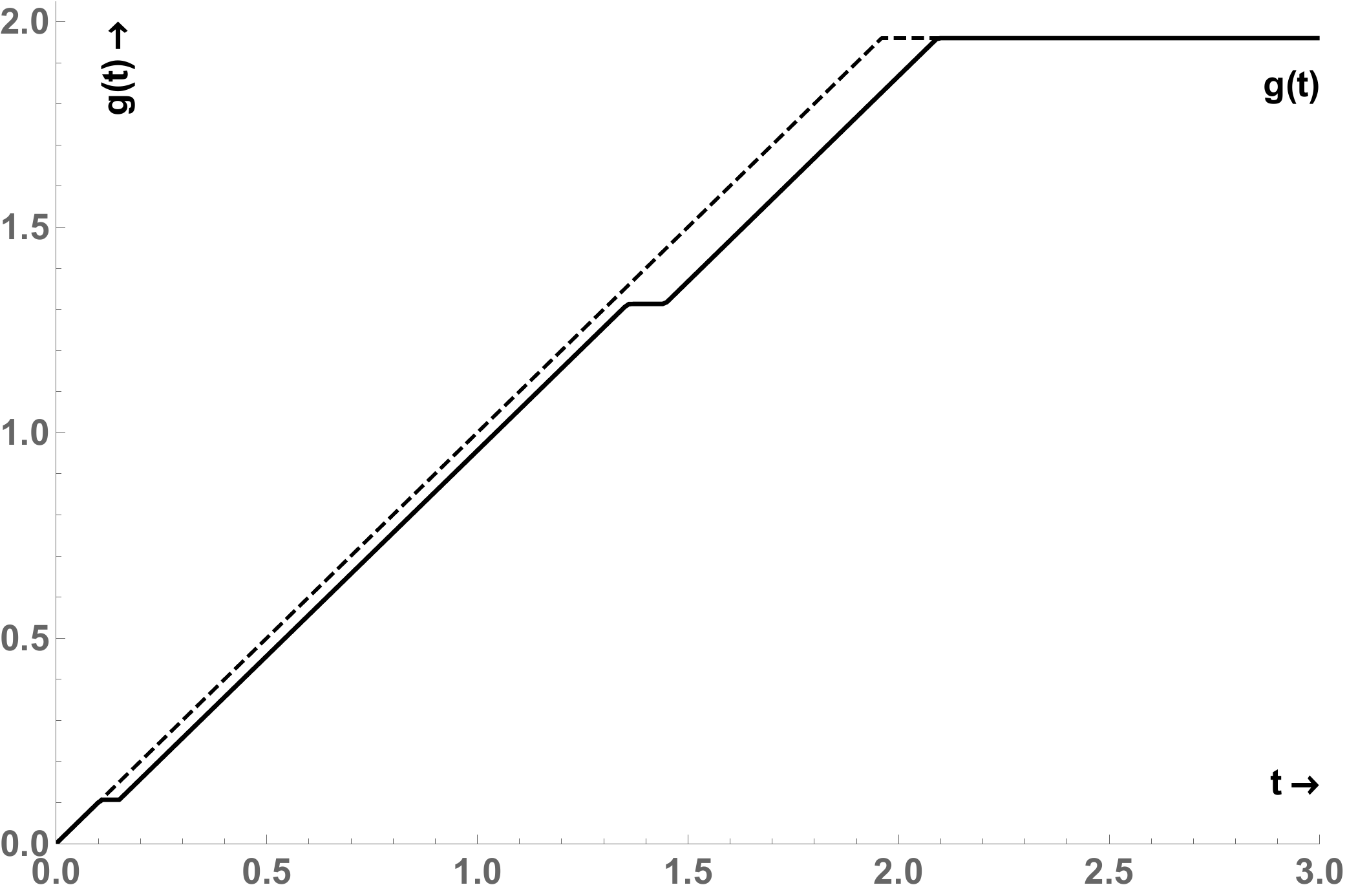} 
\vspace*{-4mm}
\caption[shot]{Optimal $g$-boundary function. The dashed line is the LR
boundary.}
\label{fig:OptimalVG}
\end{figure}
\begin{figure}[h]
\par
\begin{center}
\includegraphics[width=0.7\textwidth]{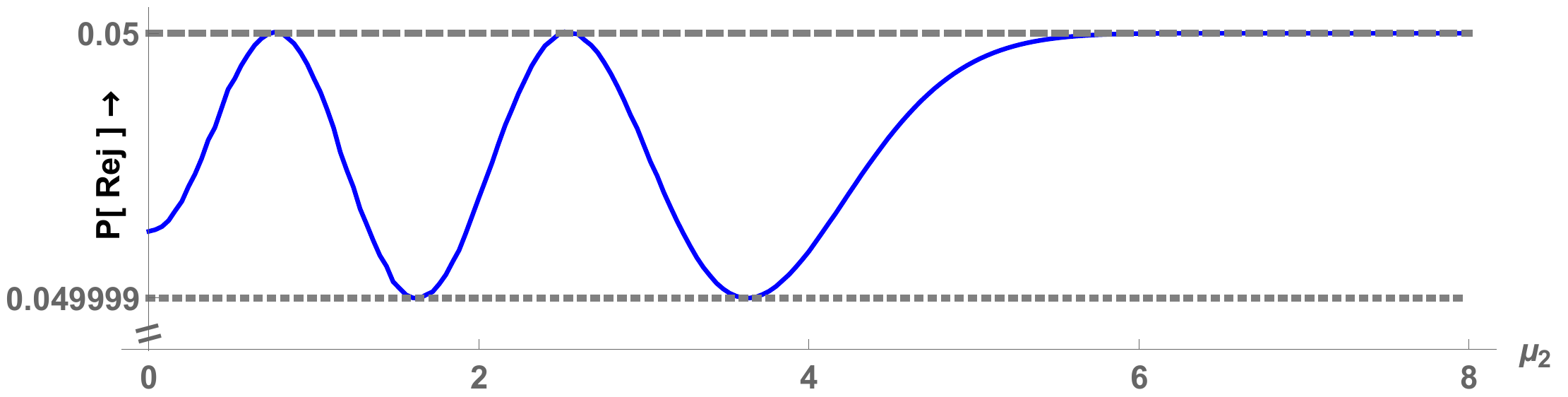}
\end{center}
\par
\vspace*{-6mm}
\caption[shot]{NRP as a function of the noncentrality parameter $\protect\mu 
$. The solid line is the optimal test and is strictly between 0.049999 ($%
=0.05-\protect\epsilon ,$ dotted line) and 0.05 (dashed line).}
\label{fig:NRPoptVGmagnified}
\end{figure}

The optimal $CR_{g}$ includes a narrow region close to the $45^{\circ }$
line where both $t$-statistics are of the same magnitude. This is expedient
for two reasons. First, because mediation requires both $\theta _{1}$ and $%
\theta _{2}$ to be non-zero. The best possibility of detecting this is along
the $45^{\circ }$ line as illustrated by the power surface in Figure \ref%
{fig:PowerSurfVGoptimalsolution} showing highest power on the diagonal. The
optimal $CR_{g}$ does exclude both $t$-statistics smaller than $0.1$, unlike
the unappealing region of the exact test. Second, the near similarity
condition requires additional critical region area in the left corner of the
octant because NRPs are particularly low for small parameter values. The
increased power is naturally linked to the increase in Type I error, but
correct size of a test by definition merely requires that this is not larger
than $5\%$. Nevertheless, size (NRP)/power trade-off exists as well as other
compromises that can be assessed using critical region analysis. For
instance, it may seem less intuitive that rejection is not monotonic in $%
t_{1}$ and $t_{2}$ since an increase in both $t_{1}$ and $t_{2}$ represents
increased evidence against the null. The LR and Wald tests are monotonic in
this sense, but lead to a reduction in power to nearly zero for small
parameter values. No observed value $t$ of $T$ will ever lie on the
horizontal or vertical axis. Any observed $t$ is therefore more likely given
an alternative parameter value than a value under the null. It is therefore
desirable to add area to the LR\ critical region even if this results in a
non-convex critical region or acceptance region. One could cogitate about
the very narrow region close to the diagonal and whether the acceptance
should not continue along the $45^{\circ }$ line further than $0.1,$ until
e.g. $1.217$ as in Figure \ref{fig:CRLRs}, but the new $g$-boundary is the
optimal solution to a well-defined problem.

The narrow region of the optimal $CR_{g}$ is a strict extension of the $%
CR_{LR},$ which itself is strictly larger than the Sobel (Wald) $CR_{W}$.
Since the new test is constructed to satisfy the size condition we have the
following: 

\begin{theorem}
The Sobel/Wald test and the LR test are inadmissible.
\end{theorem}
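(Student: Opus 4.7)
The plan is to deduce inadmissibility directly from the set-theoretic relationship between the critical regions, combined with the fact that the new $g$-test has already been shown to satisfy the size condition.

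First, I would recall the definition: a level-$\alpha$ test $\phi$ with critical region $C$ is $\alpha$-inadmissible if there exists another test with critical region $C'$ whose size is at most $\alpha$, whose power is at least that of $\phi$ for every alternative, and strictly greater for some alternative. Thus it suffices to exhibit a size-correct test that dominates both the Sobel/Wald and the LR tests in power.

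Next I would take the new $g$-test from Section \ref{sec:OptimalgTest} as the dominating test. By construction it satisfies the size condition (its NRPs lie in $[0.05-\epsilon,0.05]$, hence its size is at most $\alpha=0.05$). The paper has already observed the strict inclusions
\begin{equation*}
CR_{W} \;\subsetneq\; CR_{LR} \;\subsetneq\; CR_{g},
\end{equation*}
so for any parameter $\mu\in\mathbb{R}^{2}$ one has $P_{\mu}[T\in CR_{W}]\le P_{\mu}[T\in CR_{LR}]\le P_{\mu}[T\in CR_{g}]$ by monotonicity of probability. This already gives weak dominance in power at every alternative.

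To upgrade weak to strict dominance, I would argue that both differences $CR_{g}\setminus CR_{LR}$ and $CR_{LR}\setminus CR_{W}$ (hence also $CR_{g}\setminus CR_{W}$) have positive Lebesgue measure; this is immediate from inspecting the narrow diagonal strip in Figure~\ref{fig:OptimalVG} and the geometry of the Sobel hyperbola versus the LR square corner. Since the bivariate normal density of $T$ is strictly positive on all of $\mathbb{R}^{2}$ for every $(\mu_{1},\mu_{2})$, these set differences have strictly positive probability under every alternative, so the new test has strictly larger power than both the LR and Sobel/Wald tests at \emph{every} point of $H_{1}$. In particular, strict dominance holds at some (in fact every) alternative, completing the inadmissibility argument for both tests.

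The only point requiring care, and the closest thing to an obstacle, is verifying that the set differences indeed have positive Lebesgue measure in the reduced sample space so that the strict power increase survives after the invariance reduction; this is visually transparent from the figures and can be made rigorous by noting that $g$ has been constructed to lie strictly below the LR boundary $g_{LR}(t)=1.96\cdot\mathbf{1}\{t>1.96\}$ on an open set, and $g_{LR}$ in turn leaves an open region outside the Sobel hyperbola.
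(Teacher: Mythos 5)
Your proposal is correct and follows essentially the same route as the paper: the strict inclusions $CR_{W}\subsetneq CR_{LR}\subsetneq CR_{g}$ together with the size-correctness of the $g$-test yield strict, in fact uniform, power dominance, which is exactly the paper's one-line argument, with your positive-measure/everywhere-positive-density remark merely making explicit why strict inclusion gives strict probability inequalities. One small slip: the LR boundary in $g$-form is $g_{LR}(t)=\min\left( t,1.96\right) $ rather than $1.96\cdot \mathbf{1}\left\{ t>1.96\right\} $, but this does not affect your argument.
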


\begin{proof}
$CR_{W}\subset CR_{LR}\subset CR_{g}$ hence $P[CR_{W}]<P[CR_{LR}]<P[CR_{g}]%
\leq 0.05$. The optimal $g$-test has uniformly higher power and is correctly
sized by construction.
\end{proof}

The NRP as a function of the noncentrality parameter $\mu $ is shown in
Figure \ref{fig:NRPoptVGmagnified}. The difference from $5\%$ is less than $%
10^{-5}$ and so small that the scale had to be magnified greatly, to an
extend that prevents comparison with the LR and Sobel tests in the same
graph.

The power of the new $g$-test is very close to the power envelope (upper
bound). The maximal difference is $0.00614$. This has important
implications. First, the upper bound is tight as claimed earlier. Second,
the upper bound of the power envelope and power surface of the $g$-test look
almost identical when graphed. Figure \ref{fig:PowerSurfVGoptimalsolution}
therefore shows only the power surface of the optimal $g$-test. Finally, the
new $g$-test is optimal for all intents and purposes in a larger class of
tests. It is optimal by construction within the class of near similar tests $%
\Gamma _{\alpha ,\epsilon }^{\mathbb{M}_{0}}$, but given the closeness of
its power surface to the (non)similar power envelope, there cannot exist any
near similar test that has additional power more than $0.00614$, even if
construction is based on a different method. More generally, nonsimilar
tests can have $2\%$ points more power, but at the possible cost of odd
rejection regions and low power for other parameter values that are not used
in the construction of the test. The optimal $g$-test has good properties
for \emph{all} parameter values.

The power surface in Figure \ref{fig:PowerSurfVGoptimalsolution} shows only
the first quadrant of the parameter space of $\left( \mu _{1},\mu
_{2}\right) .$ The other three quadrants follow by simple permutations and
reflections of the parameters.

\begin{figure}[h]
\par
\begin{center}
\includegraphics[width=0.6\textwidth]{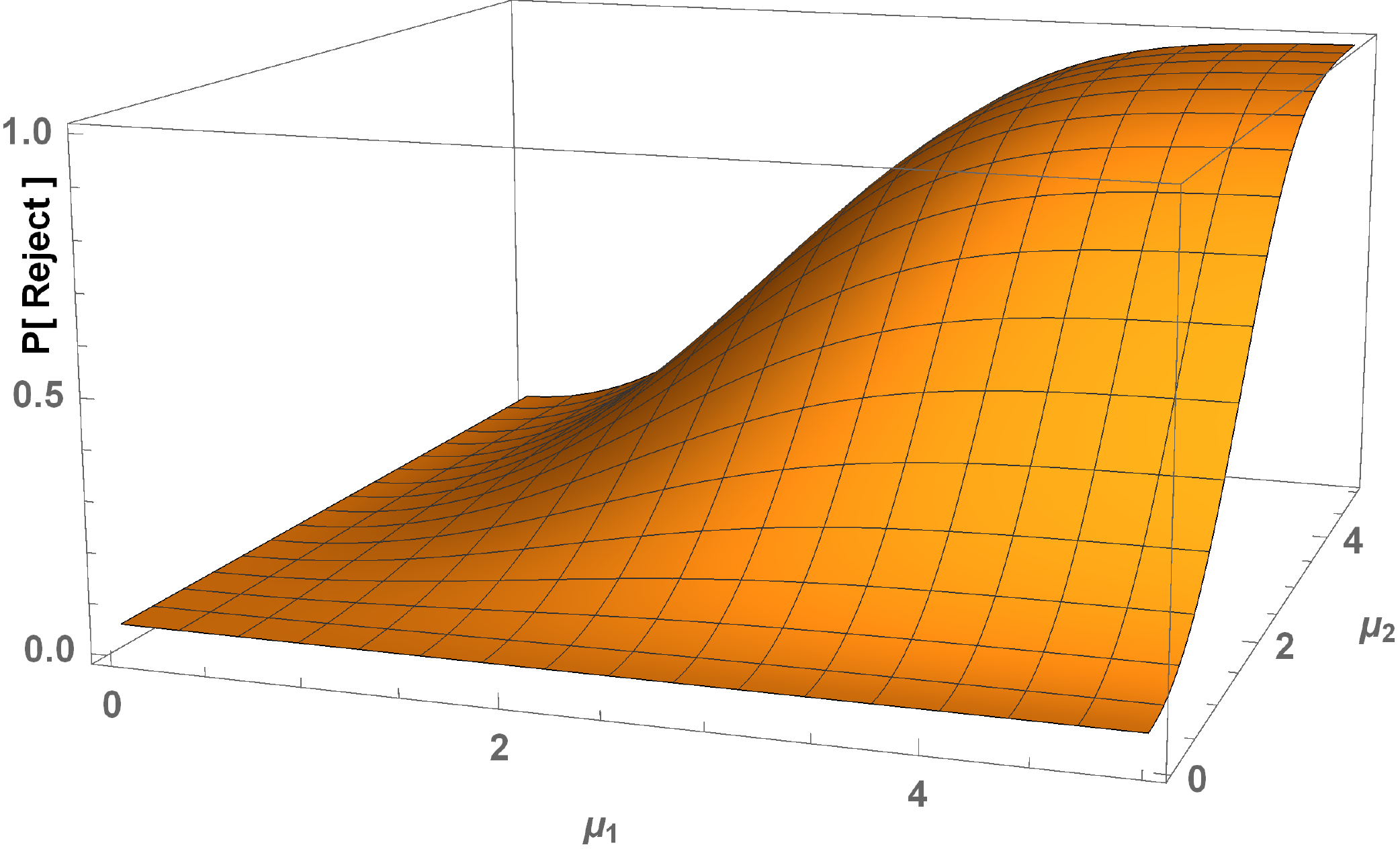}
\end{center}
\par
\vspace*{-6mm} 
\caption[shot]{Power surface optimal $g$-test }
\label{fig:PowerSurfVGoptimalsolution}
\end{figure}

\section{Application: Educational Attainment and Gender}

\label{sec:TeachApplication}

To illustrate the usage of the new test, we consider the mediation analysis
in \citet{Alan2018} on the effect of elementary school teachers' gender
beliefs, either traditional or progressive, on student mathematical and
verbal achievements. They exploit the unique institutional features of
Turkish data that provides a natural experiment in the random allocation of
teachers to schools. Information consists of approximately 4,000 third- and
fourth-grade students and their 145 teachers. The data are available in
their online appendix. Children are divided into three groups depending on
the length of exposure to a participating teacher: \textquotedblleft 1-year
exposure\textquotedblright\ (at most one year), \textquotedblleft 2-3 year
exposure\textquotedblright\ (more than one year and at most three years) and
\textquotedblleft 4-year exposure\textquotedblright\ (at most four years).\ %
\citet{Alan2018} consider three potential mediators, but we focus on
students' own gender role beliefs. In the notation of equations (1)-(3), $X$
is a dummy whether the teacher is classified as traditional or progressive.
The mediating variable $M$ is the student's own belief on gender roles. We
focus on the verbal test scores as the dependent variable $Y$. Table \ref%
{table:Alan} shows the estimates for the indirect effect for girls based on
the full sample and the three exposure groups after controlling for school
fixed effects, student characteristics, family characteristics, teacher
characteristics, teacher styles and teacher effort.

The results for the full sample are similar to the values shown in Table 5
and Table 6 of \citet{Alan2018}, although they use the approach by %
\citet{imai2013experimental}. We consider three different exposure duration
groups. The $t$-ratios of $\hat{\theta}_{1}$ are not significant at the $5\%$
level for more than 1-year exposure. For the 1-year exposure, however, it is
significant, but the $t$-ratio $t_{2}=-1.941$ of $\hat{\theta}_{2}$ is not.
So, the LR test would not find a significant effect and neither does the
simulation based method \citet{Alan2018} use. Using the new test, however,
we have $\left\vert t\right\vert _{(1)}=1.941$ and $|t|_{(2)}=2.052$, such
that $g(2.05)=1.9175$ and consequently $|t|_{(1)}>g(|t|_{(2)})$ and the
proposed test finds the mediation effect significant at the $5\%$ level. The
R function in Appendix E can be used if greater precision is desired, e.g. $%
g(2.052)=1.9195$.

\begin{table}[h]
\begin{center}
\begin{tabular}{llllllll}
Exposure: & Full &  & 1-Year &  & 2-3 Year &  & 4-Year \\ 
\cline{2-2}\cline{4-4}\cline{6-6}\cline{8-8}
$\hat{\theta _{1}}:$ & 0.199 &  & 0.256 &  & 0.109 &  & 0.064 \\ 
$t$-ratio: & \textbf{3.140} &  & \textbf{2.052} &  & \textbf{1.065} &  & 
\textbf{0.513} \\ 
$\hat{\theta _{2}}:$ & -0.119 &  & -0.097 &  & -0.125 &  & -0.113 \\ 
$t$-ratio: & \textbf{-5.343} &  & \textbf{-1.941} &  & \textbf{-4.163} &  & 
\textbf{-1.931} \\ 
$\hat{\theta _{1}}\cdot \hat{\theta _{1}}$: & -0.024 &  & -0.025 &  & -0.014
&  & -0.007%
\end{tabular}%
\end{center}
\caption{Mediation effect of students' own gender role beliefs on verbal
test scores by exposure to progressive or traditional teachers. Full-sample
results are similar to the values of Table 5 (Gender Role Beliefs: Girls)
and Table 6 (Panel B. Verbal Test Scores: Gender Role Beliefs) of 
\citet{Alan2018} who use the approach by \citet{Imaietal2010}, which is
slightly different, but the value of the indirect effect in the last row is
numerically the same as ours based on the regressions (\protect\ref%
{eq:unrstricted_Y}) and (\protect\ref{eq:unrstricted_I}) extended with the
same controls.}
\label{table:Alan}
\end{table}

\section{Higher Dimensions}

\label{sec:HigherDimensions}

In a further empirical illustration below, mediation may be via channels
that involve two mediating variables. This requires a multivariate extension
of the new test. The necessary invariance and distribution theory was given
in Section \ref{sec:Theory2} but a further aspect is the coherency between
solutions in different dimensions. Consider the null hypothesis $%
H_{0}:\theta _{1}\theta _{2}\cdots \theta _{K}=0$ in $K$ dimensions. If it
were known that $\theta _{K}\neq 0,$ then the null hypothesis reduces to $%
H_{0}:\theta _{1}\theta _{2}\cdots \theta _{K-1}=0.$ This implies that the
critical region for the $K-1$ corresponding $t$-statistics should reduce to
the solution found for $K-1$ dimensions when $\left\vert \mu _{K}\right\vert 
$ is large. For large values of $\left\vert T_{K}\right\vert $ (very small $%
p $-values) it is essentially known that $\mu _{K}$ and $\theta _{K}$ are
non-zero. The probability of rejection will effectively depend only on the $%
K-1$ other $t$-values. In two dimensions this means that as $t\rightarrow
\infty $ the boundary function $g\left( t\right) \rightarrow 1.96$, 
which is the one-dimensional solution for testing $H_{0}:\theta _{1}=0$ when 
$\alpha =5\%.$ In three dimensions it means that the solution must reduce to
the $g$-test derived in Section \ref{sec:OptimalgTest}. We make this
requirement explicit in the definition below.

\begin{definition}
The $g$-boundary in dimension $K$ for the test that rejects if $\newline
\left\vert T\right\vert _{\left( 1\right) }>g\left( \left\vert T\right\vert
_{\left( 2\right) },\cdots \left\vert T\right\vert _{\left( K\right)
}\right) $ is dimensionally coherent if for each $2\leq k\leq K$ 
\begin{equation*}
\lim_{t_{k}\rightarrow \infty }g\left( t_{2},\cdots ,t_{k-1},t_{k}\right)
=g\left( t_{2},\cdots ,t_{k-1}\right)
\end{equation*}
\end{definition}

We used a multivariate spline generalization to implement the varying $g$%
-method based on barycentric coordinates in three dimensions. It resulted in
a maximum of $0.2\%$ points difference from $5\%$. But imposing dimensional
coherency was complicated and the problem suffers from the curse of
dimensionality: the dimension of the integral increases with $K$ and the
number of knots necessary to define $g$ impedes optimization. For practical
purposes we therefore propose a simple solution that exploits the
dimensional coherency inductively and weighs the LR test in dimension $K$
with the solution obtained in dimension $K-1$. First note that one could
satisfy the coherency condition in three dimensions by simply rejecting when 
$\left\vert T\right\vert _{\left( 1\right) }>g(\left\vert T\right\vert
_{\left( 2\right) }),$ irrespective of $\left\vert T\right\vert _{\left(
3\right) }.$ This results in an invalid test however, because it is
oversized with a maximum NRP of $7.2\%$ when $K=3$. The LR test on the other
hand is conservative, in particular near the origin. A practical solution
therefore is to use a weighted average between the liberal and conservative
boundary. We use weights that depend on the largest absolute $t$-statistic.
In particular for $K=3$ 
\begin{equation*}
g(t_{2},t_{3})=(1-w(t_{3}))g_{LR}(t_{2})+w(t_{3})g(t_{2}),\ \ 
\end{equation*}%
with weight function $w(t_{3})$ a linear spline with $0\leq w(t_{3})\leq 1,$
and $\lim_{t_{3}\rightarrow \infty }w(t_{3})=1$. The test rejects if $%
\left\vert T\right\vert _{\left( 1\right) }>g(\left\vert T\right\vert
_{\left( 2\right) },\left\vert T\right\vert _{\left( 3\right) }).$
Minimizing deviation ofthe NRP from the significance level and imposing the
size restriction, results in a spline with knots: 
\begin{equation*}
\{(0,0)\},\{(1.35,0.959),(2.025,0.842),(2.7,1),(\infty ,1)\},
\end{equation*}%
leading to a maximum of $0.13\%$ points difference in NRPs from $5\%$ and
never exceeding $5\%$. The solution is shown in Figure \ref{fig:3Dsolution}.

\begin{figure}[h]
\par
\begin{center}
\includegraphics[width=3.in, height=3.in, trim = 0mm 0mm 0mm
0mm,clip]{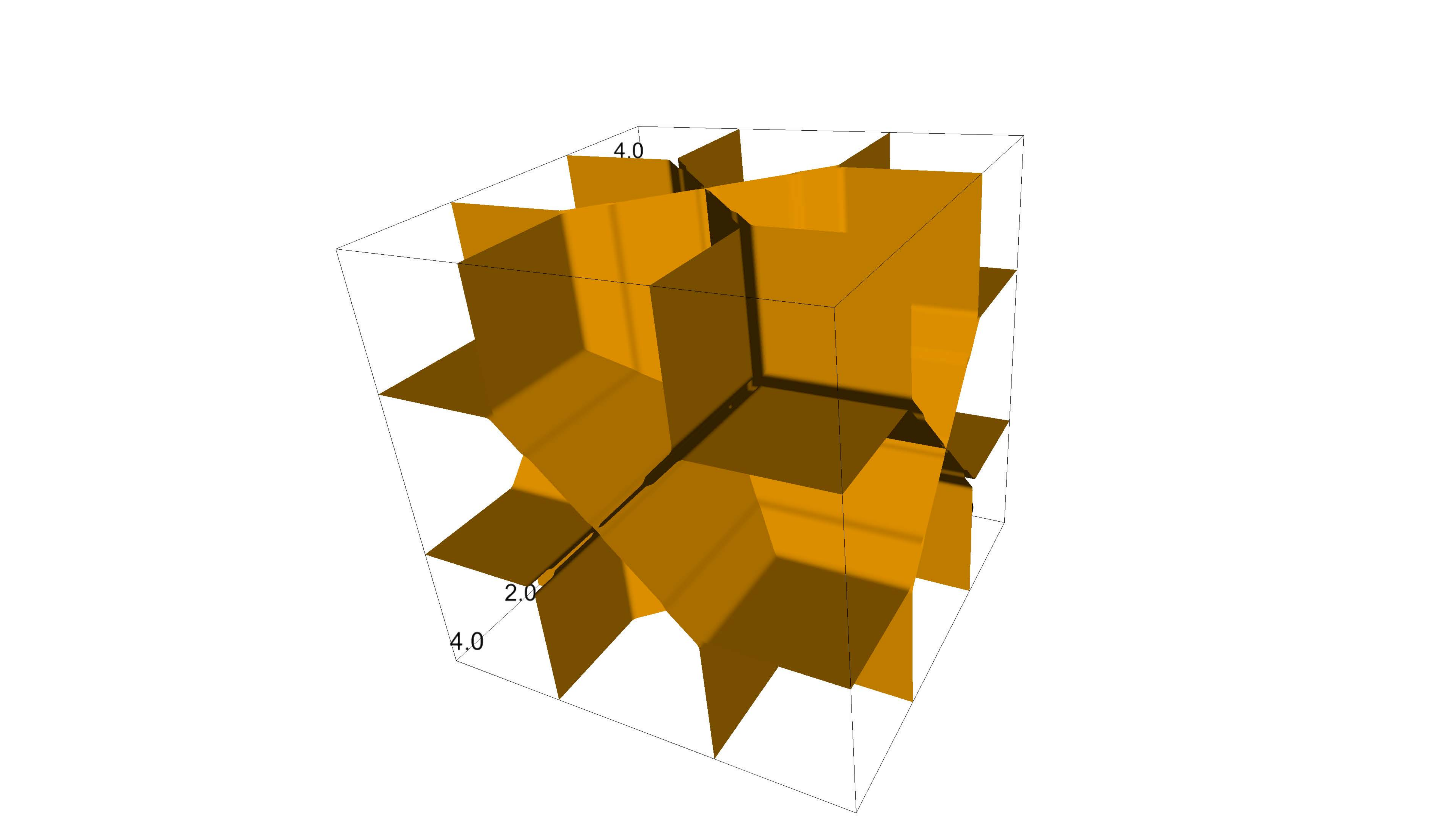}
\end{center}
\par
\vspace*{-8mm}
\caption{The $g$-boundary in 3D for the unsorted $t$-statistics $\left(
T_{1},T_{2},T_{3}\right) $. CR is furthest removed from the origin and
includes e.g. (4,4,4). The edges show the 2D solution since one $t$%
-statistic is very large. If two $t$-statistics are very large then it
reduces to 1.96, the 1D solution.}
\label{fig:3Dsolution}
\end{figure}

\section{Empirical Illustration}

\label{sec:EmpiricalIllustration}

For a numerical illustration, we consider the recursive model of union
sentiment among southern nonunion textile workers as used by \cite%
{Bollen1990}. The model:

\begin{equation}
\left[ 
\begin{array}{l}
y \\ 
m_{2} \\ 
m_{1}%
\end{array}%
\right] =\left[ 
\begin{array}{lll}
0 & \beta _{12} & \beta _{13} \\ 
0 & 0 & \beta _{23} \\ 
0 & 0 & 0%
\end{array}%
\right] \left[ 
\begin{array}{l}
y \\ 
m_{2} \\ 
m_{1}%
\end{array}%
\right] +\left[ 
\begin{array}{ll}
\tau _{11} & 0 \\ 
0 & \alpha _{22} \\ 
0 & \alpha _{32}%
\end{array}%
\right] \left[ 
\begin{array}{l}
x_{1} \\ 
x_{2}%
\end{array}%
\right] +\left[ 
\begin{array}{l}
u \\ 
v_{1} \\ 
v_{2}%
\end{array}%
\right] ,  \label{eq:unionsentimentstructural}
\end{equation}%
is a simplified version of \citet{McDonald1984} and discussed in some detail
by \citet[p. 82--93]{Bollen1989}. It analyses the direct and indirect
effects of tenure and age on union sentiment via deference and/or labor
activism. Tenure $x_{1}$ is measured in log of years working in a particular
textile mill and age $x_{2}$ is measured in years. The variables sentiment
towards unions $y$, deference/submissiveness to managers $m_{1}$, and
support for labor activism $m_{2}$, are measures based on 7, 4, and 9 survey
questions respectively. The disturbances ($u$, $v_{1}$ and $v_{2}$) are
assumed to be uncorrelated across equations and individuals. When they are
normally distributed, ML estimation of the system reduces to OLS applied to
each equation separately due to the recursive structure.

\begin{figure}[hb]
\par
\begin{center}
\includegraphics[width=0.7\textwidth]{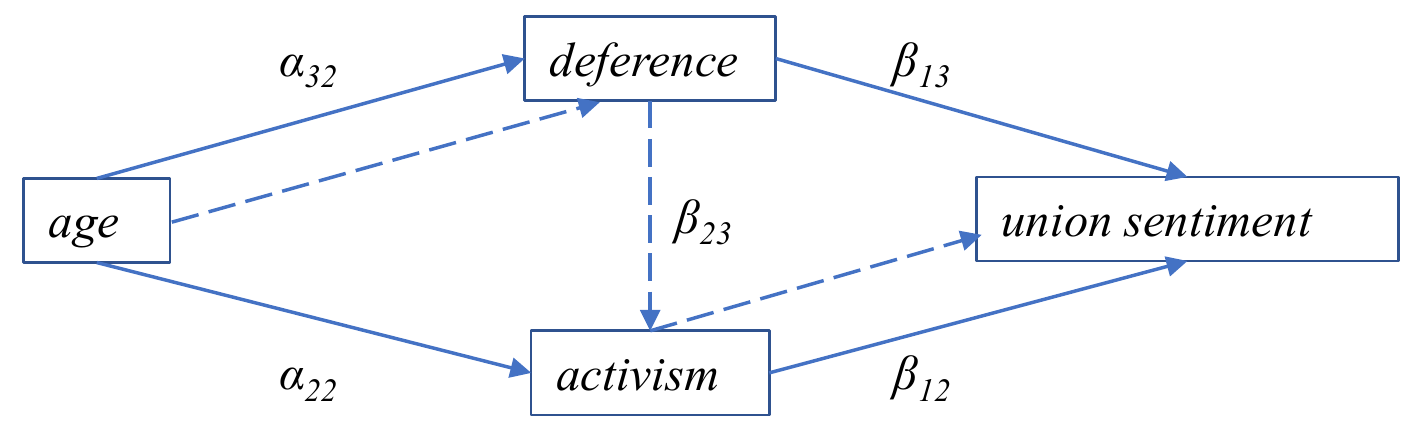}
\end{center}
\par
\vspace*{-6mm}
\caption{Union sentiment mediation graph}
\label{fig:PathDiagram4txt}
\end{figure}

We use a selection of 100 observations out of the original 173 and focus on
three alternative theories of the indirect effects from age to union
sentiment: two competing parallel effects that the age effect is mediated by
increased deference in which case $i_{1}=\alpha _{32}\beta _{13}$ quantifies
the indirect effect. The alternative mediation channel is that activism
mediates such that $i_{2}=\alpha _{22}\beta _{12}$ is the indirect effect.
The third channel is a serial effect that age affects deference, which in
turn affects activism, which in turn affects union sentiment such that $%
i_{3}=\alpha _{32}\beta _{12}\beta _{23}$ measures the indirect effect.
Figure~\ref{fig:PathDiagram4txt} illustrates the three mediation channels.
The OLS estimates of the coefficients of the structural equations and their $%
t$-statistics are shown in Table \ref{tab:UnionModelOLS}.

\begin{table}%

\centering%
\begin{tabular}{lllllll}
& $\alpha _{32}$ & $\alpha _{22}$ & $\beta _{23}$ & $\beta _{12}$ & $\beta
_{13}$ & $\tau _{11}$ \\ \hline
Estimate & --0.050 & 0.057 & --0.283 & 0.987 & --0.215 & 0.720 \\ 
$t$-statistic & --1.902 & 2.709 & --3.582 & 7.120 & --1.838 & 1.777 \\ \hline
&  &  &  &  &  & 
\end{tabular}
\vspace*{-6mm} 
\caption{OLS Estimates and $t$-statistics for Union Sentiment Model
($N=100$)} \label{tab:UnionModelOLS} 
\end{table}%

The point estimates of the indirect effects and their $t$-statistics based
on the delta method are shown in Table \ref{tab:UnionIndirecteffecttests}.
For the $g$-test we need the absolute order statistics, evaluate $g,$ and
compare. For $H_{0}^{i_{1}}:\alpha _{32}\beta _{13}=0$\ we observe $|t(\hat{%
\beta}_{13})|=1.838>1.770=g(1.902)=g(|t(\hat{\alpha}_{32})|),$ and hence
reject. For $H_{0}^{i_{2}}:\alpha _{22}\beta _{12}=0$ \ we have $|t(\hat{%
\alpha}_{22})|=2.709>1.960=g(7.120)=g(|t\left( \hat{\beta}_{12}\right) |)$
and also reject. Testing the last null hypothesis $H_{0}^{i_{3}}:\alpha
_{32}\beta _{12}\beta _{23}=0$ requires the three-dimensional solution given
in Figure \ref{fig:3Dsolution}. We have $|t(\hat{\alpha}%
_{32})|=1.902<1.96=g_{2}(3.582,7.120)=g_{2}(|t\left( \hat{\beta}%
_{23}|\right) ,|t\left( \hat{\beta}_{12}\right) |)$ and do not reject.

\begin{table}%
\centering%
\begin{tabular}{llll}
& Estimate & Sobel $t$-statistic & $g$-test \\ \hline
$i_{1}=\alpha _{32}\beta _{13}$ & $0.0108$ & $1.322$ & $1.838^{\ast
}>1.770=g(|-1.902|)$ \\ 
$i_{2}=\alpha _{22}\beta _{12}$ & $0.0561$ & $2.532^{\ast }$ & $2.709^{\ast
}>1.960=g(7.120)$ \\ 
$i_{3}=\alpha _{32}\beta _{12}\beta _{23}$ & $0.0140$ & $1.635$ & $%
1.902\ngtr 1.960=g_{2}(3.582,7.120)$ \\ \hline
&  &  & 
\end{tabular}%
\vspace*{-6mm} 
\caption{Estimates, Sobel $t$-statistics and $g$-test. *
indicates significance at $5\%$} \label{tab:UnionIndirecteffecttests} 
\end{table}%
\smallskip

The Sobel test with critical value $1.96$ concludes that $i_{2}$ is
significant but does not find enough evidence for the $i_{1}$ mediation
channel. The new $g$-test in contrast, concludes that $i_{1}$ is also
significant. Both $t$-values in this case are smaller than $1.96$, so the LR
test would not reject either. The two $t$-values are of comparable magnitude
and the $g$-test finds a significant mediation effect. For implementation of
the $g$-test only the relevant $t$-statistics are required. The absolute
values are ordered and the smallest value compared with the value of the $g$%
-function evaluated at the largest absolute $t$-value. This can be looked up
in Table~\ref{table:g_values_intro} (possibly using linear interpolation) or
one can use the R code provided in Appendix \ref{sec:AppendixRCode}.%
\footnote{%
The bootstrap is a popular alternative for testing mediation. Because of the
asymmetry of the distribution involved this is carried out through
alternative confidence intervals of the indirect effect. See e.g. %
\citet{MacKinnon2004} and \citet{Preacher2008}. The bootstrap is not valid.
Simulations we carried out showed that bootstrap tests for mediation based
on generally preferred BCa confidence intervals can have sizes of 8\% when $%
n=100$ and higher for $n$ smaller.}

For $i_{3}$ both tests draw the same conclusion. The three $t$-values
involved are not of comparable magnitude and the $t$-statistics for $\beta
_{12}$ and $\beta _{23}$ are so large that rejecting the null $H_{0}:\alpha
_{32}\beta _{12}\beta _{23}=0$ essentially depends on whether $\alpha _{32}$
is zero.\ The corresponding absolute $t$-value of $1.90$ is too small to
warrant such conclusion.

\section{Conclusion}

This paper proposes a new near similar more powerful mediation test that is
simple to used based on two ordinary $t$-statistics.\ The mediation problem
is empirically extremely important in many different fields. Theoretically
we solve an interesting statistical problem which has generated results
dating back to \citet{Craig1936} and still continues today with
contributions on poor performance of the Wald statistic, construction of
similar tests, and hypotheses with singularities. A main theoretical
contribution has been the derivation of an exact similar test that is unique
within the general class considered (with CR that is topologically simply
connected and has weak monotone c\`{a}dl\'{a}g boundary). This exact test
has unattractive statistical properties, however, leading us to consider a
class of nearly similar tests and choose an attractive test within it. By
relaxing the strict similarity condition we are able to construct a near
similar test that has superior power and avoids the disagreeable properties
of the exact similar test and would please even statistically erudite
emperors in \citet{Perlman1999}. The new test can also be justified
asymptotically under much weaker conditions and other estimation methods.

The new test is constructed using a new general method we propose for
constructing tests that are approximately similar. This varying-$g$ method
considers a flexible critical region boundary and minimizes the difference
from the level $\alpha $ of the rejection probabilities at a number of
points on the boundary of the null hypothesis. Conceptually and practically
this was very simple and straightforward to implement. It does not require,
as in other approaches, a choice of mixture distribution, nor the
construction of least favorable distributions. Numerically it is also
attractive in terms of convergence properties and avoids the need for
simulations. The appropriate distribution theory for the mediation case and
higher dimensional extensions is explicitly given. All our calculations are
done using numerical integration using the distribution of the maximal
invariant.

The new method is applicable to many other testing problems with nuisance
parameters. It is remarkable that this simple method works so well and can
deliver substantial improvements. Even the simplest linear interpolation
implementation for the mediation hypothesis increases power by almost $5\%$
points when mediation effects are small.

We have calculated a power envelope upper bound for the mediation testing
problem that is very tight. Using this result, we were able to construct a
test that is optimal within the class of near similar tests. It minimizes
the total difference between its power surface and the power envelope bound.
It results in a point wise difference less than $0.0062$ for all alternative
parameter points considered. This implies that the test is practically
optimal even if a more general class of possible test construction is
considered. A power envelope for nonsimilar tests showed that power loss due
to the similarity requirement is minimal since the maximum power loss is
less than $2\%$ points (when maximum power is around $40\%$).

The optimal $g$-test satisfies the size condition. The critical region is
strictly larger than the LR and Wald critical regions and is therefore
strictly and uniformly more powerful. The classic tests are therefore not
admissible and their bootstrapped variants are not valid. For large values
of the standardized coefficients the power difference becomes negligible,
but when mediation effects are small or have relatively large standard
errors, the power can be close to $5\%$ points higher than these classic $%
5\% $-level tests. This has important consequences for empirical work. It
enables researcher to prove mediation effects earlier in circumstances that
one could not show mediation before due to extreme conservativeness of
standard tests near the origin.

\clearpage\pagebreak

\begin{appendices}%

\section{Theory}

\label{sec:AppendixTheory}

\subsection{Elementary Relation}

Let $y=\left( y_{1},\cdots ,y_{n}\right) ^{\prime }$, $m=\left( m_{1},\cdots
,m_{n}\right) ^{\prime }$, $x=\left( x_{1},\cdots ,x_{n}\right) ^{\prime }$,
be vectors of observables in deviations from their means such that $\bar{y}%
=0 $, $\bar{m}=0$, $\bar{x}=0$ and disturbance vectors $u=\left(
u_{1},\cdots ,u_{n}\right) ^{\prime }$, $v=\left( v_{1},\cdots ,v_{n}\right)
^{\prime }.$ The model is then: 
\begin{align}
y_{i}& =\tau x_{i}+\theta _{2}m_{i}+u_{i},  \label{eq:A:unrestr1Y} \\
m_{i}& =\theta _{1}x_{i}+v_{i},  \label{eq:A:unrestr2M}
\end{align}%
and the restricted version of Equation (\ref{eq:A:unrestr1Y}) with $\theta
_{2}=0$ equals:%
\begin{equation}
y_{i}=\tau ^{\ast }x_{i}+w_{i}.
\end{equation}

The claim $\hat{\tau}^{\ast }=\hat{\tau}+\hat{\theta}_{1}\hat{\theta}_{2}$
follows from a standard exercise to relate restricted and unrestricted OLS
estimators: 
\begin{equation*}
\widehat{\tau ^{\ast }}=\left( x^{\prime }x\right) ^{-1}x^{\prime }y=\left(
x^{\prime }x\right) ^{-1}x^{\prime }\left( x\hat{\tau}+m\hat{\theta}_{2}+%
\hat{u}\right) =\hat{\tau}+\left( x^{\prime }x\right) ^{-1}x^{\prime }m\hat{%
\theta}_{2}+\left( x^{\prime }x\right) ^{-1}x^{\prime }\hat{u},
\end{equation*}%
and $\hat{\theta}_{1}=\left( x^{\prime }x\right) ^{-1}x^{\prime }m$ is the
OLS estimator in Equation (\ref{eq:A:unrestr2M}) and $x^{\prime }\hat{u}=0$
since $\hat{u}$ are the OLS residuals from Equation (\ref{eq:A:unrestr1Y})
and orthogonal to $x.$ \newline
The parameter relation $\tau ^{\ast }-\tau =\theta _{1}\theta _{2}$ follows
by substituting (\ref{eq:A:unrestr2M}) in (\ref{eq:A:unrestr1Y}):%
\begin{equation*}
y_{i}=\tau x_{i}+\theta _{2}m_{i}+u_{i}=\left( \tau +\theta _{2}\theta
_{1}\right) x_{i}+\left( \theta _{2}v_{i}+u_{i}\right) =\tau ^{\ast
}x_{i}+w_{i}.
\end{equation*}%
It follows that $H_{0}:\theta _{1}\theta _{2}=0\Leftrightarrow H_{0}:\tau
^{\ast }=\tau .$

\subsection{Likelihood}

The joint density of $\left( y,m\right) $ given $x$ is $f\left( \left.
y,m\right\vert \ x\right) =f\left( \left. y\right\vert \ m,x\right) f\left(
m,x\right) ,$ and according to the model:%
\begin{align*}
\left. y_{i}\right\vert m_{i},x_{i}& \sim N\left( \tau x_{i}+\theta
_{2}m_{i},\sigma _{11}\right) , \\
\left. m_{i}\right\vert x_{i}& \sim N\left( \theta _{1}x_{i},\sigma
_{22}\right) .
\end{align*}%
Hence the log-likelihood: $\ell =\ell \left( \tau ,\theta _{2},\sigma
_{11},\theta _{1},\sigma _{22}\right) =\log f\left( y|m,x;\tau ,\theta
_{2},\sigma _{11}\right) +\log f\left( m|x;\theta _{1},\sigma _{22}\right) ,$
for $n$ independent observations equals Equation (\ref{eq:loglik}) which can
be written as:%
\begin{align*}
\ell & \propto -\frac{1}{2\sigma _{11}}y^{\prime }y+\frac{\tau }{\sigma _{11}%
}y^{\prime }x+\frac{\theta _{2}}{\sigma _{11}}y^{\prime }m-\left( \frac{\tau
\theta _{2}}{\sigma _{11}}-\frac{\theta _{1}}{\sigma _{22}}\right) x^{\prime
}m-\frac{1}{2}\left( \frac{\theta _{2}^{2}}{\sigma _{11}}+\frac{1}{\sigma
_{22}}\right) m^{\prime }m+ \\
& ~~~\ \ \ ~~~\ \ \ ~~~\ \ \ ~~~\ \ \ ~~~\ \ \ ~~~\ \ \ ~~~\ \ \ ~~~\ \ \
~~~\ \ \ ~~~\ \ \ -\left( \frac{\tau ^{2}}{2\sigma _{11}}-\frac{\theta
_{1}^{2}}{2\sigma _{22}}\right) x^{\prime }x-\frac{n}{2}\log \left( \sigma
_{11}\sigma _{22}\right) \\
& =\eta ^{\prime }r-\kappa \left( \eta ,x^{\prime }x\right) ,~~~\ \ \ with:
\\
\eta & =\left( -\frac{1}{2\sigma _{11}},\frac{\tau }{\sigma _{11}},\frac{%
\theta _{2}}{\sigma _{11}},-\left( \frac{\tau \theta _{2}}{\sigma _{11}}-%
\frac{\theta _{1}}{\sigma _{22}}\right) ,-\frac{1}{2}\left( \frac{\theta
_{2}^{2}}{\sigma _{11}}+\frac{1}{\sigma _{22}}\right) \right) ^{\prime }, \\
r& =\left( y^{\prime }y,y^{\prime }x,y^{\prime }m,x^{\prime }m,m^{\prime
}m\right) ^{\prime },
\end{align*}%
and $\kappa $ some function of $\eta $ and $x^{\prime }x$ which is fixed.
Since $\dim \left( \eta \right) =\dim \left( r\right) $ the model is a full
exponential model of dimension five following the Koopman-Fisher-Darmois
theorem (see \citet{vanGarderen1997}) and $r$ is a complete sufficient
statistic. The score $\mathbf{s}\left( \tau ,\theta _{2},\sigma _{11},\theta
_{1},\sigma _{22}\right) =\mathbf{s}=\left( \mathbf{s}_{1}^{\prime },\mathbf{%
s}_{2}^{\prime }\right) ^{\prime }$ is analogues to the scores of the two
separate regression models since $\left( \tau ,\theta _{2},\sigma
_{11}\right) $ appears in the first equation only, and $\left( \theta
_{1},\sigma _{22}\right) $ appears in the second equation only. So:

$\mathbf{s=}\left( \frac{\left( y-\tau x-\theta _{2}m\right) ^{\prime }x}{%
\sigma _{11}},\frac{\left( y-\tau x-\theta _{2}m\right) ^{\prime }m}{\sigma
_{11}},\frac{\left( y-\tau x-\theta _{2}m\right) ^{\prime }\left( y-\tau
x-\theta _{2}m\right) }{2\sigma _{11}^{2}}-\frac{n}{2\sigma _{11}},\frac{%
\left( m-\theta _{1}x\right) ^{\prime }x}{\sigma _{22}},\frac{\left(
m-\theta _{1}x\right) ^{\prime }\left( m-\theta _{1}x\right) }{2\sigma
_{22}^{2}}-\frac{n}{2\sigma _{22}}\right) ^{\prime }$ \newline
and the Maximum Likelihood Estimator (MLE) equals the MLE for the two
equations separately:%
\begin{align*}
\left( 
\begin{array}{c}
\hat{\tau} \\ 
\hat{\theta}_{2}%
\end{array}%
\right) & =\left( \left( x:m\right) ^{\prime }\left( x:m\right) \right)
^{-1}\left( x:m\right) ^{\prime }y;~~~\ \ \ \widehat{\sigma }_{11}=\frac{1}{n%
}y^{\prime }M_{X}y; \\
\hat{\theta}_{1}& =\left( x^{\prime }x\right) ^{-1}x^{\prime }m;~~~\ \ \ 
\widehat{\sigma }_{22}=\frac{1}{n}m^{\prime }M_{x}m,
\end{align*}%
with $M_{A}=I-A\left( A^{\prime }A\right) ^{-1}A^{\prime }$ and $X=\left[ x:m%
\right] $ an $n\times 2$ matrix. The MLE is minimal sufficient and complete
because it is a bijective transformation of $r$ which is a minimal
sufficient and complete statistic.

\subsection{Classic Tests}

\textbf{Wald test.} Under $H_{0}:\theta _{1}\theta _{2}=\mathbf{r}\left(
\theta _{1},\theta _{2}\right) =0$. Then $R\left( \theta _{1},\theta
_{2}\right) =\frac{\partial \mathbf{r}\left( \theta _{1},\theta _{2}\right) 
}{\partial \left( \theta _{1},\theta _{2}\right) ^{\prime }}=\left( \theta
_{2},\theta _{1}\right) ^{\prime }$ and evaluated at the (unrestricted) MLE
equals $R\left( \hat{\theta}_{1},\hat{\theta}_{2}\right) =\left( \hat{\theta}%
_{2},\hat{\theta}_{1}\right) ^{\prime }.$ The Wald test therefore becomes: 
\begin{align*}
W& =\hat{\theta}_{1}\hat{\theta}_{2}\left( \left( 
\begin{array}{c}
\hat{\theta}_{2} \\ 
\hat{\theta}_{1}%
\end{array}%
\right) ^{\prime }\left( 
\begin{array}{ll}
\sigma _{\hat{\theta}_{1}}^{2} & 0 \\ 
0 & \sigma _{\hat{\theta}_{2}}^{2}%
\end{array}%
\right) \left( 
\begin{array}{c}
\hat{\theta}_{2} \\ 
\hat{\theta}_{1}%
\end{array}%
\right) \right) ^{-1}\hat{\theta}_{1}\hat{\theta}_{2} \\
& =\frac{\hat{\theta}_{1}^{2}\hat{\theta}_{2}^{2}}{\hat{\theta}%
_{1}^{2}\sigma _{\hat{\theta}_{2}}^{2}+\hat{\theta}_{2}^{2}\sigma _{\hat{%
\theta}_{1}}^{2}}\cdot \frac{\left( \sigma _{\hat{\theta}_{1}}^{2}\sigma _{%
\hat{\theta}_{2}}^{2}\right) ^{-1}}{\left( \sigma _{\hat{\theta}%
_{1}}^{2}\sigma _{\hat{\theta}_{2}}^{2}\right) ^{-1}}=\frac{%
T_{1}^{2}T_{2}^{2}}{T_{1}^{2}+T_{2}^{2}}
\end{align*}%
The Sobel test equals $\sqrt{W}$ and is usually expressed as the square root
of the first term in the second line:$\frac{\hat{\theta}_{1}\hat{\theta}_{2}%
}{\sqrt{\hat{\theta}_{1}^{2}\sigma _{\hat{\theta}_{2}}^{2}+\hat{\theta}%
_{2}^{2}\sigma _{\hat{\theta}_{1}}^{2}}}$.

\textbf{LR test.} The maximum value of the log-likelihood can be expressed
in terms of the OLS\ residual sum of squares in the usual way for the first
and second equation, $RSS_{1}$ and $RSS_{2}$ respectively:%
\begin{align*}
\ell \left( \hat{\tau},\hat{\theta}_{2},\hat{\sigma}_{11},\hat{\theta}_{1},%
\hat{\sigma}_{22}\right) & \propto -\frac{1}{2\hat{\sigma}_{11}}%
\sum_{i=1}^{n}\left( y_{i}-\hat{\tau}x_{i}-\hat{\theta}_{2}m_{i}\right) ^{2}-%
\frac{n}{2}\log \left( \hat{\sigma}_{11}\right) + \\
& ~~~\ \ \ ~~~\ \ \ ~~~\ \ \ ~~~\ \ \ ~~~\ \ \ -\frac{1}{2\hat{\sigma}_{22}}%
\sum_{i=1}^{n}\left( m_{i}-\hat{\theta}_{1}x_{i}\right) ^{2}-\frac{n}{2}\log
\left( \hat{\sigma}_{22}\right) \\
& =-\frac{n}{2}-\frac{n}{2}\log \left( RSS_{1}/n\right) -\frac{n}{2}-\frac{n%
}{2}\log \left( RSS_{2}/n\right) .
\end{align*}%
Denote the restricted residual sums of squares by $\widetilde{RSS}_{1}$ when 
$\theta _{2}=0$, $\widetilde{RSS}_{2}$ when $\theta _{1}=0,$ and the
restricted maximized log-likelihoods by: 
\begin{align*}
\ell _{\theta _{1}=0}\left( \hat{\tau},\hat{\theta}_{2},\hat{\sigma}_{11},0,%
\tilde{\sigma}_{22}\right) & \propto -\frac{n}{2}-\frac{n}{2}\log \left(
RSS_{1}/n\right) -\frac{n}{2}-\frac{n}{2}\log \left( \widetilde{RSS}%
_{2}/n\right) , \\
\ell _{\theta _{2}=0}\left( \tilde{\tau},0,\tilde{\sigma}_{11},\hat{\theta}%
_{1},\hat{\sigma}_{22}\right) & \propto -\frac{n}{2}-\frac{n}{2}\log \left( 
\widetilde{RSS}_{1}/n\right) -\frac{n}{2}-\frac{n}{2}\log \left(
RSS_{2}/n\right) .
\end{align*}%
The LR\ test of the full model with five parameters, against the model with
the single restriction $\theta _{2}=0$ equals: 
\begin{align*}
LR_{\theta _{2}=0}& =2\left( -\frac{n}{2}\log \left( RSS_{1}/n\right) +\frac{%
n}{2}\log \left( \widetilde{RSS}_{1}/n\right) \right) =n\log \left( 1+\frac{1%
}{n}T_{2}^{2}\right) \text{ since} \\
\widetilde{RSS}_{1}& =\hat{\theta}_{2}^{\prime }m\left( m^{\prime
}M_{x}m\right) ^{-1}m\hat{\theta}_{2}+RSS_{1}\text{ \ and} \\
T_{2}^{2}& =\hat{\theta}_{2}^{\prime }m\left( m^{\prime }M_{x}m\right) ^{-1}m%
\hat{\theta}_{2}/\hat{\sigma}_{11}=\left( \widetilde{RSS}_{1}-RSS_{1}\right)
/\left( RSS_{1}/n\right) .
\end{align*}%
Analogously the LR test for $\theta _{1}=0$ equals:%
\begin{equation*}
LR_{\theta _{1}=0}=n\log \left( 1+\frac{1}{n}T_{1}^{2}\right) .
\end{equation*}%
The likelihood ratio test for $H_{0}:\theta _{1}=0$ and/or $\theta _{2}=0$
uses the maximized log-likelihood under the alternative (the same in both
cases) and under the null, which means minimizing over $LR_{\theta _{1}=0}$
and $LR_{\theta _{2}=0}$ and hence: 
\begin{equation*}
LR=\min \left\{ LR_{\theta _{1}=0},LR_{\theta _{2}=0}\right\} ,
\end{equation*}%
which is equivalent to rejecting for large values of: 
\begin{equation*}
\min \left\{ T_{1}^{2},T_{2}^{2}\right\} \text{ }\text{or}\min \left\{
\left\vert T_{1}\right\vert ,\left\vert T_{2}\right\vert \right\} .
\end{equation*}

\section{Invariance}

\label{sec:AppendixInvariance}

When testing the no-mediation hypothesis $H_{0}:\theta _{1}\theta _{2}=0,$
the parameters $\tau ,\sigma _{11},$ $\sigma _{22}$ are nuisance parameters
and their values have no influence on whether the null is true or not. %
\citet{HillierVanGarderenVanGiersbergen2021} shows that $T$ is maximal
invariant with respect to an appropriate group of transformations that
leaves the testing problem invariant and provides justification for
restricting attention to the two $t$-statistics. These exact invariance
results provide a strong justification for restricting attention to the two $%
t$-statistics for any sample size, finite or asymptotically, since it is
natural to restrict the problem to procedures that are scale invariant and
do not depend on $\tau .$

The testing problem has further invariance properties. The problem is
invariant to changing the signs (reflections) of $T_{1}$ and $T_{2}$ or
permuting them. This leads to maximal invariants with a sample and parameter
space that is only part of $%
\mathbb{R}
^{K}$.

\begin{proof}
(of Theorem \ref{Th:MaxInvAbsOrderStatK}) $\left\{ \left\vert T\right\vert
_{\left( 1\right) },...,\left\vert T\right\vert _{\left( K\right) }\right\} $
is obviously invariant to changes in sign and permutation as a consequence
of the absolute values and subsequent sorting. It is a maximal invariant
because any two $T$ and $\tilde{T}$ such that $\left\{ \left\vert
T\right\vert _{\left( 1\right) },...,\left\vert T\right\vert _{\left(
K\right) }\right\} =\left\{ |\tilde{T}|_{\left( 1\right) },...,|\tilde{T}%
|_{\left( K\right) }\right\} $ can only hold if $\tilde{T}$ is a permutation
of $T$ with a number of sign changes. Hence there will exist a
transformation $\mathbf{g}=\mathbf{g}_{1}\cdot \mathbf{g}_{2}\in G_{1}\times
G_{2}$ s.t. $\tilde{T}=\mathbf{g}\cdot T.$ The same argument holds for $%
\left\{ \left\vert \mu \right\vert _{\left( 1\right) },...,\left\vert \mu
\right\vert _{\left( K\right) }\right\} $ since the group of transformations
on the parameter space is the same as on the sample space. That the
distribution of $\left\{ \left\vert T\right\vert _{\left( 1\right)
},...,\left\vert T\right\vert _{\left( K\right) }\right\} $ depends only on $%
\left\{ \left\vert \mu \right\vert _{\left( 1\right) },...,\left\vert \mu
\right\vert _{\left( K\right) }\right\} $ is a property of a maximal
invariant.
\end{proof}

Lemma \ref{Lem:MaxInvAbsOrderStatKdistribution} gives an explicit expression
that further shows that the distribution is invariant under the $G_{1}\times
G_{2}$.

\begin{proof}
(of Lemma \ref{Lem:MaxInvAbsOrderStatKdistribution}) The absolute value of
the normal variate $T_{k}$ with mean $\mu _{k}$ and variance $1$ follows a
noncentral Chi-distribution with one degree of freedom. The $K$
distributions $\chi \left( \left\vert t\right\vert _{\left( k\right)
},\left\vert \mu \right\vert _{\left( k\right) }\right) $ are independent.
The result is then a direct application of \citet[eq. 6]{Vaughan1972}.
\end{proof}

\section{Proof of Theorem \protect\ref{Th:exactsimilartest}}

\label{sec:AppendixProofNonExistenceTheorem}

The proof exploits the symmetries of the problem and the completeness of the
normal distribution. We therefore consider the sample space of $T$ in $%
\mathbb{R}
^{2}$, rather than the octant that is the sample space of the maximal
invariant. The proof is constructive. It shows how to construct a monotonic
weakly increasing function $g\left( \cdot \right) $ with $\left\lfloor
1/\alpha \right\rfloor $ steps, when $1/\alpha $ is an integer. If $1/\alpha 
$ is not an integer then it cannot satisfy the condition that the final step
equals the asymptotic value $\lim_{t\rightarrow \infty }g\left( t\right) ,$
which is the normal critical value and determined by letting $\mu
\rightarrow \infty .$ There are two trivial solutions. First, if $g\left(
t\right) =0$ then $CR_{g}=%
\mathbb{R}
^{2}$ and the test always rejects, leading to an NRP of $100\%$ for all
parameter values. The other trivial solution is $g\left( t\right) =t$ such
that $AR_{g}=%
\mathbb{R}
^{2}$ and the test would never reject and the NRP is 0\% for all $\mu $.

\bigskip

\begin{proof}
Symmetry of the problem implies for the boundary function $g\left( -t\right)
=g\left( t\right) =g\left( \left\vert t\right\vert \right) $ and $g\left(
t\right) \leq t.$ So we only define $g\left( t\right) $ for $t\geq 0\,$\ and 
$g\left( 0\right) =0.$ The boundary function $g\left( t\right) $ is weakly
monotonically increasing by assumption, but may be a step function. If $%
g\left( t\right) $ is a step function then it has no ordinary inverse but we
can define its generalized inverse as:%
\begin{equation*}
g^{-1}\left( t\right) =\inf \left\{ x\mid g\left( x\right) >t\right\}
\end{equation*}%
(cf. quantile function e.g. \citet[p.304]{Vaart2000}). This definition of $%
g^{-1}\left( t\right) $ with strict inequality is chosen such that a
necessary condition for similarity may hold.

\begin{enumerate}
\item For any finite constant $cv_{1}$ we have 
\begin{equation*}
\lim_{\mu _{1}\rightarrow \infty }P\left[ \left\vert T_{1}\right\vert
>cv_{1}\mid \mu _{1}\right] =1
\end{equation*}%
and rejection only depends on $T_{2}$ when $\mu _{1}\rightarrow \infty $.
Hence $P\left[ \left\vert T_{2}\right\vert >c_{\infty }\right] =2P\left[
T_{2}>c_{\infty }\right] =2(1-\Phi \left( c_{\infty }\right) )=\alpha ,$
such that:%
\begin{equation*}
c_{\infty }=\Phi ^{-1}\left( 1-\frac{\alpha }{2}\right) .
\end{equation*}%
Monotonicity of $g$ implies $g\left( t\right) \leq \lim_{t\rightarrow \infty
}$ $g\left( t\right) =\Phi ^{-1}\left( 1-\frac{\alpha }{2}\right) $ and we
follow the convention that 
\begin{equation*}
g^{-1}\left( t\right) =+\infty \text{ if }t\geq \Phi ^{-1}\left( 1-\frac{%
\alpha }{2}\right) .
\end{equation*}

\item The null rejection probability as a function of $\mu $ equals:%
\begin{eqnarray*}
NRP_{\alpha }\left( \mu _{1}\right) &=&2\int_{-\infty }^{+\infty }\phi
\left( t_{1}-\mu _{1}\right) \left[ \int_{g\left( t_{1}\right)
}^{g^{-1}\left( t_{1}\right) }\phi \left( t_{2}\right) dt_{2}\right] dt_{1}
\\
&=&2\int_{-\infty }^{+\infty }\phi \left( t_{1}-\mu _{1}\right) \left[ \Phi
\left( g^{-1}\left( t_{1}\right) \right) -\Phi \left( g\left( t_{1}\right)
\right) \right] dt_{1}.
\end{eqnarray*}

\item Similarity requires $NRP_{\alpha }\left( \mu \right) =\alpha \forall
\mu \in 
\mathbb{R}
$. Hence%
\begin{eqnarray}
\alpha &=&2\int_{-\infty }^{+\infty }\phi \left( t_{1}-\mu _{1}\right) \left[
\Phi \left( g^{-1}\left( t_{1}\right) \right) -\Phi \left( g\left(
t_{1}\right) \right) \right] dt_{1}\Rightarrow  \notag \\
0 &=&\int_{-\infty }^{+\infty }\phi \left( t_{1}-\mu _{1}\right) \left[ \Phi
\left( g^{-1}\left( t_{1}\right) \right) -\Phi \left( g\left( t_{1}\right)
\right) -\frac{\alpha }{2}\right] dt_{1}  \notag \\
&=&\int_{-\infty }^{+\infty }\phi \left( t_{1}-\mu _{1}\right) F\left(
t\right) dt_{1},\text{\ \ with}  \notag \\
F\left( t\right) &=&\Phi \left( g^{-1}\left( t\right) \right) -\Phi \left(
g\left( t\right) \right) -\frac{\alpha }{2}.  \label{eq:Ftis0appendix}
\end{eqnarray}

\item Completeness of the normal distribution with mean $\mu _{1}$ implies
the condition%
\begin{equation}
F\left( t\right) =0.  \label{eq:Fis0condition}
\end{equation}%
We will show that this leads to a step function and the proof iteratively
determines the stepping points and step sizes, illustrated in the Figure \ref%
{Fig:ExactSimBoundalfa25}.

\item Starting at $t=0$ we have $g\left( 0\right) =0$ by definition, but for
a generalized inverse $g^{-1}\left( 0\right) $ there are two possibilities:

\begin{enumerate}
\item $g^{-1}\left( 0\right) =0=g\left( 0\right) .$ For $t=0,$ condition (%
\ref{eq:Fis0condition}) implies: $F\left( 0\right) =\Phi \left( g^{-1}\left(
0\right) \right) -\Phi \left( g\left( 0\right) \right) -\frac{\alpha }{2}=$ $%
\Phi \left( 0\right) -\Phi \left( 0\right) -\frac{\alpha }{2}=-\frac{\alpha 
}{2}=0$ only holds if $\alpha =0$. \ In this case $F\left( t\right) =\Phi
\left( g^{-1}\left( t\right) \right) -\Phi \left( g\left( t\right) \right)
=0 $ can only occur if $g^{-1}\left( t\right) =g\left( t\right) =t$ and the
test never rejects.

\item $g^{-1}\left( 0\right) =c_{1}>0$. In this case

\begin{enumerate}
\item $g^{-1}\left( t\right) =\inf \left\{ x\mid g\left( x\right) >t\right\} 
$ \ and\ $g^{-1}\left( 0\right) =\inf \left\{ x\mid g\left( x\right)
>0\right\} $ imply $g\left( t\right) =0,\forall 0\leq t<c_{1}$ and $g\left(
c_{1}\right) >0.$

\item $g\left( t\right) =0\ \forall \ 0\leq t<c_{1}$ and $F\left( 0\right) $
imply $g^{-1}\left( t\right) =c_{1}$\ $\forall \ 0\leq t<c_{1}.$

\item $g^{-1}\left( t\right) =c_{1}$\ $\forall \ 0\leq t<c_{1}$ and $%
g^{-1}\left( t\right) =\inf \left\{ x\mid g\left( x\right) >t\right\} $ in
turn imply $g\left( c_{1}\right) \geq c_{1},$ but $g\left( t\right) \leq t$
so

\item $g\left( c_{1}\right) =c_{1}.$

\item $c_{1}=\Phi ^{-1}\left( \frac{1}{2}+\frac{\alpha }{2}\right) $ because 
$F\left( 0\right) =\Phi \left( c_{1}\right) -\Phi \left( 0\right) -\frac{%
\alpha }{2}=0$ implies $\Phi \left( g^{-1}\left( t\right) \right) =\frac{1}{2%
}+\frac{\alpha }{2}$ and the result follows by the uniqueness of the inverse
of $\Phi $.
\end{enumerate}
\end{enumerate}

\item This argument can now be repeated.

\begin{enumerate}
\item In $\Phi \left( g^{-1}\left( c_{1}\right) \right) -\Phi \left( g\left(
c_{1}\right) \right) -\frac{\alpha }{2}=0$ implies $\Phi \left( g^{-1}\left(
c_{1}\right) \right) =\left( \frac{1}{2}+\frac{\alpha }{2}\right) +\frac{%
\alpha }{2}$ and $g^{-1}\left( c_{1}\right) =\Phi ^{-1}\left( \frac{1}{2}+%
\frac{2\alpha }{2}\right) =c_{2}.$ As in 5.b.i)\ $g\left( t\right)
=c_{1}\forall c_{1}\leq t<c_{2}\Rightarrow $ ii) $g^{-1}\left( t\right)
=c_{2}$\ $\forall \ c_{1}\leq t<c_{2}$ $\Rightarrow $ iii) $g\left(
c_{2}\right) =c_{2}$ and $g^{-1}\left( c2\right) =\Phi ^{-1}\left( \frac{1}{2%
}+\frac{3\alpha }{2}\right) =c_{3}.$

\item After $J$ repetitions of this argument we obtain a step function:%
\begin{eqnarray*}
g\left( \left\vert t\right\vert \right) &=&c_{j}:c_{j-1}\leq t<c_{j},\ \ \
j=1,..,J \\
c_{0} &=&0,c_{j}=\Phi ^{-1}\left( \frac{1}{2}+j\frac{\alpha }{2}\right)
\end{eqnarray*}
\end{enumerate}

\item An exact similar test only exists if $1/\alpha $ is an integer.

\begin{enumerate}
\item If $R$ is an integer such that $\alpha =1/R$ then $c_{R}=\Phi
^{-1}\left( \frac{1}{2}+\frac{R\ \alpha }{2}\right) =\Phi ^{-1}\left(
1\right) =+\infty $ and $c_{R-1}=\Phi ^{-1}\left( \frac{1}{2}+\frac{\left(
R-1\right) \ \alpha }{2}\right) =\Phi ^{-1}\left( 1-\frac{\alpha }{2}\right)
=$ $\lim_{t\rightarrow \infty }$ $g\left( t\right) $. \ The resulting
step-function $g\left( t\right) $ is constructed such that $F\left( t\right)
=0$ for all $t\in 
\mathbb{R}
$. The NRP equals $2\int_{-\infty }^{+\infty }\phi \left( t_{1}-\mu \right) %
\left[ \Phi \left( g^{-1}\left( t_{1}\right) \right) -\Phi \left( g\left(
t_{1}\right) \right) \right] dt_{1}=\alpha $ for all $\mu .$ \newline
So an exact similar test exists if $1/\alpha $ is an integer.

\item If $1/\alpha $ is not an integer then define $R=\left\lfloor 1/\alpha
\right\rfloor $ (entier or floor function) $R\in 
\mathbb{N}
$ and $0<r<1$ the remainder such that $\alpha =1/\left( R+r\right) .$ So $%
R\alpha <1$. After $R$ iterations of the argument in Step 5 gives $\Phi
^{-1}\left( \frac{1}{2}+\frac{R\ \alpha }{2}\right) =\Phi ^{-1}\left( \frac{1%
}{2}+\frac{\left( R+r-r\right) \alpha }{2}\right) =\Phi ^{-1}\left( 1-\frac{%
r\alpha }{2}\right) >\Phi ^{-1}\left( 1-\frac{\alpha }{2}\right) .$
Similarly $R-1$ iterations give: $\Phi ^{-1}\left( \frac{1}{2}+\frac{\left(
R-1\right) \ \alpha }{2}\right) =\Phi ^{-1}\left( 1-\frac{\alpha }{2}-\frac{%
r\alpha }{2}\right) <\Phi ^{-1}\left( 1-\frac{\alpha }{2}\right) $.
Therefore the step-function $g\left( \cdot \right) $ does not have $%
\lim_{t\rightarrow \infty }g\left( t\right) =$ $\Phi ^{-1}\left( 1-\frac{%
\alpha }{2}\right) $ which is the requirement for the NRP to equal $\alpha $
as $\mu _{1}\rightarrow \infty .$ Further note that the next step after $R$
iterations has $\left( 1+\frac{(1-r)\alpha }{2}\right) >1$ so we cannot
apply $\Phi ^{-1}$ to obtain the next boundary point.\newline
So no similar $g$-test exists if $1/\alpha $ is not an integer.
\end{enumerate}
\end{enumerate}
\end{proof}

\section{Algorithms}

\label{sec:AppendiixAlgorithms}

The construction of the optimal $g$-test is in two steps. The first step is
a basic implementation of the general varying-$g$ method. This generates a
near similar test that deviates less than $0.01\%$ points from $5\%.$ We use
this $\epsilon $ as a starting value for determining an upper bound to the
power envelope. The second step is using this upper bound to derive an
optimal $g$-test that minimizes the distance between the power surface and
the power envelope for tests in $\Gamma _{\alpha ,\epsilon }^{\mathbb{M}%
_{0}} $. Implementation of the varying-$g$ method.

\textbf{Basic $g$-function algorithm}

\begin{enumerate}
\item Define $g\left( \cdot \right) $ nonparametrically as a linear spline
defined by $J+2$ knots $\left\{ \left( t^{\left( j\right) },g^{\left(
j\right) }\right) \right\} _{j=0}^{J+1},$ i.e. by $J+2$ values $g^{\left(
j\right) }$ on a grid of points $t^{\left( j\right) }.$ The first and last
knots are fixed at $(0,0)$ and $(2.5,z_{0.025})$ respectively, so there are $%
J$ knots to be chosen. One of those knots is chosen $t=z_{0.025}$ such that
the LR\ boundary can be constructed as initializing function. For points $t$
not on the grid, $g\left( t\right) $ is obtained by linear interpolation,
and $g\left( t\right) =z_{0.025}\approx 1.96$ for $t>2.5.$

\item The criterion function $Q\left( g\right) $ is the accumulated NRP
deviation from $5\%$ as measured by a quadratic loss function over a grid of
points $\left\{ \mu _{0}^{\left( \iota \right) }\right\} _{\iota
=1}^{\Upsilon _{0}}$ with $\Upsilon _{0}>J$ and $\mu _{0}^{\left( 1\right)
}=0:$%
\begin{align*}
Q\left( g\right) & =\sum_{\iota =1}^{\Upsilon _{0}}\left( NRP_{g}\left( \mu
_{0}^{\left( \iota \right) }\right) -0.05\right) ^{2}, \\
s.t.~~~NRP_{g}\left( \mu _{0}^{\left( \iota \right) }\right) & \leq 0.05\ \
\ \ with \\
NRP_{g}\left( \mu \right) & =P\left[ T\in CR_{g}\left\vert \mu _{1}=0,\mu
_{2}=\mu \geq 0\right. \right]
\end{align*}

\item Minimize $Q\left( g\right) $ by varying $g\left( \cdot \right) $:

\begin{enumerate}
\item Initialize $g\left( \cdot \right) $ with knots $\left\{ \left(
0,0\right) ,\left( 1.96,1.96\right) ,\left( 2.5,1.96\right) \right\} $\
which is the $J=1$ function corresponding to the LR\ boundary. The first and
last knot are fixed and the middle one is varied when optimizing $Q\left(
g\right) $.

\item For given $g\left( \cdot \right) $ calculate the NRPs by numerical
integration for the grid of $\Upsilon _{0}$ noncentrality parameter points $%
\{(0,\mu ^{\left( \iota \right) })\}_{\iota =1}^{\Upsilon _{0}}$ under the
null, with $\Upsilon _{0}\geq J$ and calculate $Q\left( g\right) .$

\item Vary $g\left( \cdot \right) $ by changing $J$ knots and minimize the
criterion function $Q\left( g\right) $, subject to:

\begin{enumerate}
\item $0\leq g^{\left( j+1\right) }-g^{\left( j\right) }<\delta \ $:
monotonicity and limited increase

\item $g\left( t\right) \leq t\ $: logical restriction since maximal
invariant is absolute order statistic and $\left\vert T\right\vert _{\left(
1\right) }\leq \left\vert T\right\vert _{\left( 2\right) }$

\item $g^{\left( J+1\right) }=z_{0.025}\ $: dimensional coherence requires
reduction to one-dimensional solution (see Section \ref{sec:HigherDimensions}%
)
\end{enumerate}

\item Increase the number of knots $J$ and iterate until
convergence.\smallskip
\end{enumerate}
\end{enumerate}

\textbf{Comments}

\begin{enumerate}
\item We set $g\left( t\right) =z_{0.025}$ for $t>2.5,$ because for large
enough $\left\vert T\right\vert _{2}$ it is essentially known that $\theta
_{2}\neq 0$ and the rejection depends only on whether $\theta _{1}=0$ is
rejected. The corresponding $5\%$ critical value for $\left\vert
T\right\vert _{1}$ based on the normal distribution is the usual $%
z_{0.025}\approx 1.96$ as $\left\vert T\right\vert _{2}\rightarrow \infty .$

\item For $J$ small there are big gains in reducing the deviation from $5\%$
by varying the knots $\left\{ t^{\left( j\right) },g^{\left( j\right)
}\right\} _{j=1}^{J}$ and also by increasing $J,$ see Figure \ref%
{fig:NRPcompareJ369LRW}.

\item The number $\Upsilon _{0}$ of $\mu ^{\left( \iota \right) }$ points to
check similarity was chosen to be $\Upsilon _{0}=76>J$: 60 points equally
spaced between 0 and 6, and 16 points equally spaced between 6 and 20. This
imposes 152 side conditions. Step 3(c) imposes a further $3J$ restrictions
approximately for every choice of $J$, and about $100$ when $J=32.$

\item We have actually used as a side condition $0.05-\epsilon \leq
NRP_{g}\left( \mu _{0}^{\left( \iota \right) }\right) \leq 0.05$ and
iterated to find the smallest $\epsilon $ that yields a feasible solution.
\end{enumerate}

\smallskip

\textbf{Optimal $g$-function}

In order to find the optimal $g$, we minimize the sum of differences between 
$g$'s power\ surface and the power envelope on a grid of points, subject to
the size and $\epsilon $-similarity conditions. We impose monotonicity $%
g\left( t^{\left( i+1\right) }\right) \geq g\left( t^{\left( i\right)
}\right) $ and, since by definition of the absolute order statistic $%
\left\vert T\right\vert _{\left( 1\right) }\leq \left\vert T\right\vert
_{\left( 2\right) }$,we logically restrict $g$ to $0\leq g\left( t\right)
\leq t$.

\vspace{1em}

\textbf{Optimal $g$-function algorithm}

\begin{enumerate}
\item Define $g\left( \cdot\right) $ nonparametrically as a linear spline
defined above.

\item Define the criterion function $Q_{\epsilon }^{\ast }\left( g\right) $
as the accumulated power difference over the triangular grid of points $%
\mathbb{M}_{1}=\left\{ \left( \mu _{1}^{\left( \gamma ,\kappa \right) },\mu
_{2}^{\left( \gamma ,\kappa \right) }\right) \right\} _{1\leq \gamma <\kappa
\leq \Upsilon _{1}}:$%
\begin{equation*}
Q_{\epsilon }^{\ast }\left( g\right) =\sum_{\kappa =1}^{\Upsilon
_{1}}\sum_{\gamma \leq \kappa }\bar{\pi}\left( \mu _{1}^{\left( \gamma
,\kappa \right) },\mu _{2}^{\left( \gamma ,\kappa \right) }\right) -P\left[
CR_{g}\mid \left( \mu _{1}^{\left( \gamma ,\kappa \right) },\mu _{2}^{\left(
\gamma ,\kappa \right) }\right) \right]
\end{equation*}

\item Minimize $Q_{\epsilon }^{\ast }\left( g\right) $ by varying $g\left(
\cdot \right) $:

\begin{enumerate}
\item Initialize with $g\left( \cdot \right) $ equal to the LR boundary or
the previously determined basic $g$-function$.$

\item For given $g\left( \cdot \right) $ calculate $Q_{\epsilon }^{\ast
}\left( g\right) $ by numerical integration$.$

\item Vary $g\left( \cdot \right) $ by changing $J$ knots and minimize the
criterion function $Q_{\epsilon }^{\ast }\left( g\right) $, subject to:

\begin{enumerate}
\item $0.05-\epsilon \leq P\left[ CR_{g}\mid \left( 0,\mu _{0}^{\left( \iota
\right) }\right) \right] \leq 0.05,~~\forall \iota =1,\cdots ,\Upsilon _{0}$
: near similarity and size restrictions

\item $0=g\left( 0\right) \leq g\left( t^{\left( j\right) }\right) \leq
g\left( t^{\left( j+1\right) }\right) \leq t^{\left( j+1\right) }$ :
monotonicity,

\item $g^{\left( j+1\right) }-g^{\left( j\right) }<t^{\left( j+1\right)
}-t^{\left( j\right) }:$ limited increase and derivative,

\item $g\left( t\right) \leq t$ : logical restriction since argument is
absolute order statistic,

\item $g^{\left( J+1\right) }=z_{0.025}$ : dimensional coherence.
\end{enumerate}

\item Increase the number of knots $J$ and iterate until convergence.
\end{enumerate}
\end{enumerate}

The basic implementation algorithm solved the optimal $g$-boundary by
minimizing $\epsilon .$ Once $\epsilon $ is determined, the current
algorithm is akin to solving a dual problem that uses $\epsilon $ for the
inequality restrictions and maximizes power. It minimizes the total
difference from the power envelope.

\vspace{2em}

\textbf{Power Envelopes}

We calculate two power envelopes: one for near similar tests in $\Gamma
_{\alpha ,\epsilon }$ and a second for nonsimilar tests. The algorithm for
calculating the power envelope is related to \cite{Chiburis2009} and
implemented in Julia, see \cite{Bezanson17julia}, using Gurobi, an
optimization package that can handle many side restrictions; see \cite%
{Gurobi}. We maximize power subject to size and near similarity restrictions
on a grid of ${\Upsilon }_{0}$ parameter points under the null: $%
0.05-\epsilon \leq NRP\left( \mu ^{\left( \iota \right) }\right) \leq 0.05$
for $\iota ={1,...,\Upsilon }_{0}$. The upper bounds ensure correct size, at
least for the points considered. The lower bounds constitute the near
similarity restriction. The power envelope is obtained by repeating this
maximization on a grid of points $\left( \mu _{1},\mu _{2}\right) $ under
the alternative.

For the nonsimilar power envelope we can discard the lower bound
restrictions $0.05-\epsilon \leq NRP\left( \mu ^{(\iota )}\right) $. The
power can only increase (or remain the same) and the difference between the
two different power envelopes is the power loss one suffers from insisting
on similarity. This turns out to be less than $2\%$ points and it should be
stressed that this overstates the loss since no single test achieves the
power envelope.

Denote the parameter space for the ordered absolute noncentrality parameter%
\newline
$\Xi =\left\{ \left( \mu _{1},\mu _{2}\right) \in \mathbb{%
\mathbb{R}
}^{+}\mathbb{\times 
\mathbb{R}
}^{+}\mid 0\leq \mu _{1}\leq \mu _{2}\right\} .$

We will use a bounded (triangular) subset of this octant $\Xi $ defined as%
\newline
$\overline{\Xi }=\left\{ \left( \mu _{1},\mu _{2}\right) \in \mathbb{%
\mathbb{R}
}^{+}\mathbb{\times 
\mathbb{R}
}^{+}\mid 0\leq \mu _{1}\leq \mu _{2}\leq \mu _{\max }\right\} $ and
partitioned it into a null and alternative parameter set\newline
$\overline{\Xi }_{0}=\left\{ \left( \mu _{1},\mu _{2}\right) \in \mathbb{%
\mathbb{R}
}^{+}\mathbb{\times 
\mathbb{R}
}^{+}\mid 0=\mu _{1}\leq \mu _{2}\leq \mu _{\max }\right\} $ and\newline
$\overline{\Xi }_{1}=\left\{ \left( \mu _{1},\mu _{2}\right) \in \mathbb{%
\mathbb{R}
}^{+}\mathbb{\times 
\mathbb{R}
}^{+}\mid 0<\mu _{1}\leq \mu _{2}\leq \mu _{\max }\right\} $ respectively.

Analogously define the sample space of the maximal invariant/absolute order
statistic as $\mathbb{T=}\left\{ \left( t_{1},t_{2}\right) \in 
\mathbb{R}
_{0}^{+}\mathbb{\times }%
\mathbb{R}
_{0}^{+}\mid t_{1}\leq t_{2}\right\} $. Very large values of $t_{1}$ and $%
t_{2}$ are of limited interest and for computational purposes we can
restrict ourselves to a bounded triangular subset of the sample space: $%
\overline{\mathbb{T}}=\left\{ \left( t_{1},t_{2}\right) \in 
\mathbb{R}
_{0}^{+}\mathbb{\times }%
\mathbb{R}
_{0}^{+}\mid t_{1}\leq t_{2}\leq t_{\max }\right\}. $ \newline

\textbf{Power Envelope Algorithm}

\begin{enumerate}
\item Discretize $\overline{\Xi}_{0}$ into $\Upsilon_{0}$ points under $%
H_{0}:$ $\mathbb{M}_{0}=\left\{ \left( 0,\mu_{0}^{\left( \iota\right)
}\right) \right\} _{\iota=1}^{\Upsilon_{0}}.$

\item Discretize $\overline{\Xi }_{1}$ by choosing a triangular array of $%
\frac{1}{2}\Upsilon _{1}\left( 1+\Upsilon _{1}\right) $ points under $H_{1}:$
$\mathbb{M}_{1}=\left\{ \left( \mu _{1}^{\left( \gamma ,\kappa \right) },\mu
_{2}^{\left( \gamma ,\kappa \right) }\right) \right\} _{1\leq \gamma \leq
\kappa \leq \Upsilon _{1}}$

\item Partition $\overline{\mathbb{T}}$ into squares $s_{ij}$ with $1\leq
i\leq j\leq N$ such that $\dbigcup\limits_{1\leq i\leq j\leq N}s_{ij}=%
\overline{\mathbb{T}}$ and $s_{ij}\cap s_{kl}=\varnothing \ \ \forall
(i,j)\neq \left( k,l\right) .$

\item Under $H_{0}$ for $\iota=1,\cdots,\Upsilon_{0}$ calculate $%
p_{ij}^{\iota}=P\left[ \left( \left\vert T\right\vert _{\left( 1\right)
},\left\vert T\right\vert _{\left( 2\right) }\right) \in s_{ij}\mid\left(
0,\mu_{0}^{\left( \iota\right) }\right) \in\overline{\Xi}_{0}\right] $

\item For each $1\leq\gamma,\kappa\leq m$ choose $\mu^{\left( \gamma
,\kappa\right) }=\left( \mu_{1}^{\left( \gamma,\kappa\right) },\mu
_{2}^{\left( \gamma,\kappa\right) }\right) \in\mathbb{M}_{1}$ \ under the
alternative. For this $\mu$:

\begin{enumerate}
\item Calculate $p_{ij}^{\gamma\kappa}=P\left[ \left( \left\vert
T\right\vert _{\left( 1\right) },\left\vert T\right\vert _{\left( 2\right)
}\right) \in s_{ij}\mid\mu^{\left( \gamma,\kappa\right) }=\left( \mu
_{1}^{\left( \gamma,\kappa\right) },\mu_{2}^{\left( \gamma,\kappa\right)
}\right) \right] $ for each $s_{ij}\in\overline{\mathbb{T}}$

\item Determine the critical region to maximize the power 
\begin{equation*}
\underset{\left\{ \phi _{ij}^{\gamma \kappa },1\leq i\leq j\leq N\right\} }{%
\max }\sum_{1\leq i\leq j\leq N}p_{ij}^{\gamma \kappa }\phi _{ij}^{\gamma
\kappa }
\end{equation*}%
by selecting indicators $\phi _{ij}^{\gamma \kappa }=\mathbf{1}_{CR}\left(
s_{ij}\right) $ equal to 1 if $s_{ij}$ is part of the critical region, or 0
if part of the acceptance region, subject to the near similarity and size
restrictions on the NRPs:%
\begin{equation*}
0.05-\epsilon \leq \sum_{1\leq i\leq j\leq N}p_{ij}^{\iota }\phi _{ij}^{\mu
}\leq 0.05\text{ for }\iota ={1,...,\Upsilon }_{0}
\end{equation*}
\end{enumerate}
\end{enumerate}

\textbf{Comments}. Optimizer: Gurobi: $t_{\max }=11,$ each square $s_{ij}$
has lengths $0.01.$ Hence cardinality of $|\overline{\mathbb{T}}|=285150$.
For power calculations we use for $\mathbb{M}_{1}$ a regular grid with $\mu
_{2}\in \{0.2,0.4,\cdots ,4\},\mu _{1}\in \{0.2,0.4,\cdots ,\mu _{2}\}$. For
size and near similarity restrictions we use $\mu _{0}\in \{0.0,0.1,\cdots
,7.5\}$ and for near similarity $\epsilon =10^{-5}.$


\section{$g$-Function R Code}

\label{sec:AppendixRCode} \textbf{R Code}

\texttt{g <- function(t)\{}

\texttt{\ \ tabs=abs(t)}

\texttt{\ \ x \TEXTsymbol{<}- c(0., 0.1, 0.11, 0.13, 0.14, 0.15, 1.35, 1.36,
1.37, 1.44, 1.45, 2.05, 2.06, 2.07, 2.08, 2.09, 2.1)}

\texttt{\ \ y \TEXTsymbol{<}- c(0., 0.1, 0.106723, 0.106723, 0.106724,
0.106724, 1.30583, 1.31286, 1.3131, 1.3131, 1.3175, 1.9175, 1.9275, 1.9375,
1.9475, 1.9575, 1.95996)}

\texttt{\ \ ifelse(tabs\TEXTsymbol{>}=2.1,1.95996,approx(x,y,xout=tabs)\$y)}

\texttt{\ \} }

\end{appendices}%

\bibliographystyle{chicago}
\bibliography{REFERENCES}

\end{document}